\documentclass[runningheads]{llncs}

\usepackage{xspace}
\usepackage{amsmath}
\usepackage{amssymb}
\usepackage{amsfonts}
\usepackage{latexsym}
\usepackage{todonotes}
\usepackage{booktabs}
\usepackage{enumerate}
\usepackage[linesnumbered,ruled,noline,noend]{algorithm2e}
\usepackage[nodate]{datetime}

\newcommand{\ALG}{\ensuremath{\operatorname{\textsc{Alg}}}\xspace}
\newcommand{\OPT}{\ensuremath{\operatorname{\textsc{Opt}}}\xspace}
\newcommand{\opt}{\ensuremath{\operatorname{\textsc{Opt}}}\xspace}

\newcommand{\algis}{\ensuremath{\operatorname{Algorithm~\ref{alg:IS}}}\xspace}

\newcommand{\LA}{Late Accept\xspace}
\newcommand{\LR}{Late Reject\xspace}
\newcommand{\LAthenR}{Late Accept/Reject\xspace}
\newcommand{\LAM}{\LA model\xspace}
\newcommand{\LRM}{\LR model\xspace}
\newcommand{\LAthenRM}{\LAthenR model\xspace}
\newcommand{\SM}{standard model\xspace}

\newcommand\eps{\varepsilon}
\newcommand{\CEIL}[1]{\left\lceil#1\right\rceil}

\newcommand{\SET}[1]{\left\{#1\right\}}

\begin{document}

\title{Relaxing the Irrevocability Requirement for Online Graph Algorithms\thanks{Supported in part by the Danish Council for Independent Research, Natural Sciences, grant DFF-1323-00247, and the Villum Foundation, grant VKR023219.} 
}
\titlerunning{Relaxing Irrevocability} 
\author{     Joan Boyar \inst{1}
        \and Lene M. Favrholdt \inst{1}
        \and Michal Kotrb\v{c}\'{\i}k \inst{2}
        \and Kim S. Larsen \inst{1}
        }
\authorrunning{J. Boyar, L.\,M. Favrholdt, M. Kotrb\v{c}\'{\i}k, K.\,S. Larsen} 
\institute{University of Southern Denmark, Odense, Denmark,
\email{ \{joan,lenem,kslarsen\}@imada.sdu.dk}
\and
{The University of Queensland, Brisbane, Australia,
\email{m.kotrbcik@uq.edu.au}}
}

\maketitle

\begin{abstract}
Online graph problems are considered
in models where the irrevocability
requirement is relaxed. Motivated by practical examples where, for
example, there is a cost associated with building a facility and no
extra cost associated with doing it later, we consider the \LAM, where
a request can be accepted at a later point, but any acceptance is
irrevocable. Similarly, we also consider a \LRM, where an accepted
request can later be rejected, but any rejection is irrevocable (this
is sometimes called preemption).
Finally, we consider the \LAthenRM, where late accepts and rejects
are both allowed, but any late reject is irrevocable. 
For Independent Set, the \LAthenRM is
necessary to obtain a constant competitive ratio, but for Vertex
Cover the \LAM is sufficient and for Minimum Spanning Forest the
\LRM is sufficient. The Matching problem has a competitive ratio of $2$, but
in the \LAthenRM, its competitive ratio is~$\frac{3}{2}$.
\end{abstract}

\section{Introduction}
For an online problem, the input is a sequence of requests.
For each request, the algorithm has to make some decision without any
knowledge about possible future requests.
Often (part of) the decision is whether to accept or reject the
request and the decision is usually assumed to be irrevocable.
However, many online problems have applications for which total
irrevocability is not inherent or realistic.
Furthermore, when analyzing the quality of online algorithms,
relaxations of the irrevocability constraint often result in
dramatically different results, especially for graph problems.
This has already been realized and several papers study various
relaxations of the irrevocability requirement.
In this paper we initiate a systematic study of the nature of
irrevocability and of the implications for the performance of the
algorithms. 
Our aim is to understand whether it is the absence of
knowledge of the future or the irrevocability restrictions
on the manipulation of the solution set that makes an
online problem difficult.

We consider graph problems and focus on four classical problems,
\emph{Independent Set}, 
\emph{Matching}, \emph{Vertex Cover}, 
and \emph{Minimum Spanning Forest}.
Independent Set and Vertex Cover are studied in the vertex arrival
model.
In this model, vertices arrive one by one together with all the edges
between the newly arrived vertex and previous vertices.
Matching and Minimum Spanning Forest are studied in the edge arrival
model, but the results hold in the vertex arrival model as well.
In the edge arrival model, edges arrive one by one, and if a vertex
incident with the newly-arrived edge was not seen previously, it is
also revealed.

\subsection*{Relaxed irrevocability}

For the four problems considered in this
paper, the online decision is whether to accept or reject the current
request.
In the {\em standard} model of online problems, this decision is
irrevocable and has to be made without any knowledge about possible
future requests.
We relax the irrevocability requirement by allowing the 
algorithm to perform two additional operations, namely
\emph{late accept} and \emph{late reject}. 
Late accept allows the algorithm to accept not only the current request
but also requests that arrived earlier.
Thus, late accept relaxes irrevocability by not forcing the algorithm
to discard the items that are not used immediately. 
Late reject allows the algorithm to remove
items from the solution being constructed,
relaxing the irrevocability of the decision to accept an item.
When the algorithm is allowed to perform late accept or late reject,
but not both, we speak of a \emph{\LAM} and \emph{\LRM}, respectively.
Note that, in these two models, the late operations are irrevocable. 
We also consider the situation where the
algorithm is allowed to perform \emph{both} late accepts and late rejects,
focusing on the \emph{\LAthenRM}, where any item can be late-accepted
and late-rejected, but once it is late-rejected, this
decision is irrevocable.  In other words, if the algorithm performs
both late accept and late reject on a single item, the late accept has
to precede the late reject.  

We believe that the \LA, \LR, and \LAthenR models are
appropriate modeling tools corresponding to many natural settings.
Matching, for example, in the context of online gaming or chats,
functions in the \LAM. Indeed, the users are in the
pool until assigned, allowing the late accept, but once the users are
paired, the connection should not be broken by the operator.  
Note that the matching problem is a maximization problem.
For minimization problems, accepting a request may correspond to
establishing a resource at some cost.
Often there is no natural reason to require the establishment to
happen at a specific time.
Late acceptance was considered
for the dominating set problem in~\cite{BEFKL16}, which also contains
further feasible practical applications and additional rationale behind 
the model.

When the knapsack problem is studied in the \LRM,
items are usually 
called \emph{removable}; see for example~\cite{IT02,HKMG14,HKM15,CJS16,HM16}. 
For most other
problems, late rejection is usually called~\emph{preemption} and has been
studied in variants of many online problems,
for example
call control~\cite{BFL96,GGKMY97},
maximum coverage~\cite{SG09,RR16}, and
weighted matching problems~\cite{ELMS11,ELSW13}. Preemption was also previously
considered for one of the problems we consider here, independent set,
in~\cite{KKKK16}, but
in a model where advice is used, presenting lower bounds on
the amount of advice necessary to achieve given
competitive ratios in a stated range.

Online Vertex Cover was studied in~\cite{DP05}, where they
considered the possibility of swapping some of the
accepted vertices for other vertices at the very end, at some
cost depending on the number of vertices involved.

A similar concept is studied in, for example,~\cite{IW91,MSVW16,GGK16,GK14}
for online Steiner tree problems, MST, and TSP.
Here, replacing an accepted edge with another is allowed, 
and the objective
is to minimize the number of times this occurs while obtaining a good
competitive ratio. The problem is said to allow \emph{rearrangements} or 
\emph{recourse}.

TSP has also been studied~\cite{JL14} in a model where the actual acceptances 
and rejections (rejections carry a cost) are made at any time.

\subsection*{Competitive analysis}

For each graph problem, we study online algorithms in the standard,
\LA, \LR, and \LAthenR models using the standard tool of
{competitive analysis}~\cite{ST85j,KMRS88j},
where the performance of an online algorithm
is compared to the optimum algorithm \OPT via
the competitive ratio.
For any algorithm (online or offline), $A$, we let $A(\sigma)$ denote the
value of the objective function when $A$ is applied to the input
sequence~$\sigma$.

For minimization problems, we say that an algorithm, \ALG, is
 {\em $c$-competitive}, if there exists a constant
$\alpha$ such that, for all inputs $\sigma$, $\ALG(\sigma) \le c \cdot \OPT(\sigma) +
\alpha$.  
Similarly, for maximization problems, \ALG is $c$-competitive,
if there exists a constant $\alpha$ such that, for all inputs $\sigma$,
$\OPT(\sigma) \le c \cdot \ALG(\sigma) + \alpha$.
In both cases, if the inequality holds for $\alpha = 0$, the algorithm
is {\em strictly} $c$-competitive.  
The \emph{(strict) competitive ratio} of \ALG is the infimum over
all $c$ such that \ALG is (strictly) $c$-competitive.
The competitive ratio of a problem $P$ is the infimum over the
competitive ratio of all online algorithms for the problem.
For all combinations of the problem and the model, we obtain matching lower
and upper bounds on the competitive ratio.

For ease of notation for our results, we adopt the following conventions
to express that a problem essentially has competitive ratio~$n$,
i.e., it is true up to an additive constant.
We say that a problem has competitive ratio $n-\Theta(1)$ if
\begin{itemize}
\item for any algorithm, there is a constant $b>0$ such that the
strict competitive ratio is at least $n-b$, and
\item for any constant $b$, there is a strictly $(n-b)$-competitive algorithm
for graphs with at least $b+1$ vertices.
\end{itemize}
Similarly, we say that a problem has competitive ratio $n/\Theta(1)$ if
\begin{itemize}
\item for any algorithm, there is a constant $b>0$ such that the strict
competitive ratio is at least $n/b$, and
\item for any constant $b$, there is an $n/b$-competitive algorithm for
graphs with at least $b$ vertices.
\end{itemize}
This notation is used in Theorems~\ref{thm:islam} and~\ref{rej-vc}.
For all other results, the upper bounds hold for the strict
competitive ratio.
For convenience, when stating results containing both an upper bound
on the strict competitive ratio and a lower bound on the competitive
ratio, we use the term ``competitive ratio'' even though the result
holds for the strict competitive ratio as well.

\subsection*{Our results}

The paper shows that for some problems the \LAM allows for algorithms 
with significantly better competitive ratios, while for others it is the \LRM
which does. For other problems, the \LAthenRM is necessary to get
these improvements. See Table~\ref{table:results}.
Note that only deterministic algorithms are considered, not randomized
algorithms.

Our results on Minimum Spanning Forest follow from previous results.
Thus, they are mainly included to give an example where late rejects
bring down the competitive ratio dramatically.
The technical highlights of the paper are 
the results for Independent Set in the \LAthenRM,
where, in Theorems~\ref{thm:larIS-alg} and~\ref{thm:larIS-lb}, we prove
matching lower and upper bounds of $3\sqrt{3}/2$ on the competitive
ratio.

\begin{table}[htb]
\centering
\caption{Competitive ratios of the four problems in each of
the four models. $W$ is the ratio of the largest weight to the smallest.\label{table:results}}
\begin{tabular}{l@{\hspace{1em}}c@{\hspace{1em}}c@{\hspace{1em}}c@{\hspace{1em}}cccl}
\toprule
Problem & Standard & Late Accept & Late Reject & \multicolumn{4}{c}{Late Accept/Reject} \\[1ex]
\midrule
Independent Set & $n-1$ &  $\frac{n}{\Theta(1)}$ & $\CEIL{\frac{n}{2}}$ & \hspace{1.5em} & $\frac{3\sqrt{3}}{2}$ & $\approx$ & $2.598$  \\[1ex]
Matching & $2$ & $2$ & $2$  & \hspace{1.5em} & $\frac{3}{2}$ &&\\[1ex]
Vertex Cover & $n-1$ & $2$ & $n-\Theta(1)$ & \hspace{1.5em} & $2$ &&\\[1ex]
Min.\ Spanning Forest & $W$ & $W$ & $1$ & \hspace{1.5em} & $1$ &&\\[1ex]
\bottomrule
\end{tabular}
\end{table}

We consider only undirected graphs $G=(V,E)$.
Throughout the paper, $G$ will denote the graph under consideration,
and $V$ and $E$ will denote its vertex and edge set, respectively.
Moreover, $n = |V|$ will always denote the number of vertices in $G$.
We use $uv$ for the undirected edge connecting vertices $u$ and $v$,
so $vu$ denotes the same edge.

\section{Independent Set}
An \emph{independent set} for a graph $G=(V,E)$ is a subset $I
\subseteq V$ such that no two vertices in $I$ are connected by an edge.
For the problem called Independent Set, the objective is to find an
independent set of maximum cardinality.
We consider online Independent Set in the vertex arrival model.

\begin{theorem}
For Independent Set in the \SM, the strict competitive ratio is~$n-1$.
\end{theorem}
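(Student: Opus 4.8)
The plan is to prove matching upper and lower bounds of $n-1$ on the strict competitive ratio, recalling that for this maximization problem strict $(n-1)$-competitiveness means $\OPT(\sigma) \le (n-1)\ALG(\sigma)$ for all inputs $\sigma$.

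For the upper bound I would analyze the natural greedy algorithm that accepts an arriving vertex exactly when it has no previously accepted neighbor. This always outputs a maximal independent set $M$, so $|M| \ge 1$ whenever $n \ge 1$. If $G$ has at least one edge, then $\OPT(\sigma) \le n-1$, hence $\OPT(\sigma) \le n-1 = (n-1)\cdot 1 \le (n-1)\,|M|$. If $G$ has no edges, greedy accepts every vertex and $|M| = n = \OPT(\sigma)$. In both cases $\OPT(\sigma) \le (n-1)\,\ALG(\sigma)$, so greedy is strictly $(n-1)$-competitive (for $n \ge 2$).

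For the lower bound I would run an adaptive adversary against an arbitrary deterministic algorithm $A$. The adversary presents isolated vertices $v_1, v_2, \ldots$ one at a time until $A$ accepts one. If $A$ accepts $v_k$ with $k \le n-1$ (having irrevocably rejected $v_1,\dots,v_{k-1}$), the adversary then presents $v_{k+1},\dots,v_n$, each joined by an edge only to $v_k$. Then $A$ cannot accept any of $v_{k+1},\dots,v_n$ (all blocked by $v_k$) nor revisit $v_1,\dots,v_{k-1}$, so $\ALG(\sigma)=1$, while $\{v_1,\dots,v_{k-1}\}\cup\{v_{k+1},\dots,v_n\}$ is independent, giving $\OPT(\sigma)\ge n-1$ and a ratio of at least $n-1$. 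I would dispatch the two degenerate cases: if $A$ first accepts $v_n$ (or the adversary presented all $n$ vertices isolated), the instance is edgeless with $\ALG(\sigma)\le 1$ and $\OPT(\sigma)=n$, so the ratio is at least $n$; and if $A$ never accepts, $\ALG(\sigma)=0 < \OPT(\sigma)/c$ for every finite $c$. Either way the lower bound holds, and combined with the upper bound this shows the strict competitive ratio is exactly $n-1$.

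The argument is largely routine; the only points needing care are making the adversary's graph depend on $A$'s first acceptance and handling the boundary cases above (the algorithm accepting the last vertex or accepting nothing, and the edgeless graph in the upper bound).
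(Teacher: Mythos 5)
Your proposal is correct and follows essentially the same approach as the paper: the greedy algorithm for the upper bound and the adaptive adversary that waits for the first accepted vertex and then attaches all remaining vertices to it for the lower bound. Your version merely spells out the degenerate cases (edgeless graph, algorithm accepting nothing or only the last vertex) that the paper leaves implicit.
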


\begin{proof}
For the upper bound, consider the greedy algorithm that accepts each
vertex, if possible.

For the lower bound, consider the following adversarial
strategy where independent vertices arrive until a vertex, $v$, is
accepted by the algorithm.  From this point on, the adversary presents
vertices with $v$ as their only neighbor.  Hence, the algorithm cannot
accept any 
further vertices, whereas \OPT rejects $v$ and accepts all
other $n-1$ vertices, yielding the result. 
\mbox{}\qed\end{proof}

We first show that allowing only late rejects helps, but only very
slightly. 

\begin{theorem}
For Independent Set in the \LRM, the strict competitive ratio is~$\CEIL{n/2}$.
\end{theorem}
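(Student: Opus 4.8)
For Independent Set in the \LRM, we must establish two bounds: an upper bound showing some algorithm is strictly $\CEIL{n/2}$-competitive, and a matching lower bound.

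\medskip

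\textbf{Upper bound.} The plan is to analyze the greedy algorithm that accepts each arriving vertex whenever it has no neighbor among the currently accepted vertices, and never late-rejects anything. (Since late reject is optional, a pure greedy strategy is legal in the \LRM.) Let $I$ be the final independent set produced and let $I^*$ be an optimal independent set. The key observation is that every vertex $v \in I^* \setminus I$ was rejected by greedy because, at the time $v$ arrived, some neighbor $u$ of $v$ was already in $I$; moreover $u$ stays in $I$ forever since greedy never rejects. Thus we can define a map from $I^* \setminus I$ into $I$ sending each such $v$ to a witnessing earlier neighbor $u \in I$. I would argue this map is injective onto $I \setminus I^*$ in the right way, or more simply bound $|I^* \setminus I| \le |I|$ edge-by-edge: actually the cleanest route is to note that each $u \in I$ can be the "blocking" neighbor for several vertices of $I^*$, so injectivity fails, and instead one counts via the graph structure. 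The crisp statement to aim for is $\OPT \le 2\,\ALG$ up to rounding: since $I$ is a maximal independent set, $V \setminus I$ is a vertex cover, so every edge has an endpoint outside $I$, hence $|I^*| \le |I| + |I^* \setminus I| \le |I| + |V \setminus I|$; this alone gives only $n$, so a sharper argument is needed. The right sharper fact: $I^* \cap I$ and $I^* \setminus I$ partition $I^*$, and $I^* \setminus I \subseteq V \setminus I$; pair each $v \in I^*\setminus I$ with a private earlier neighbor in $I$ using that $v$'s neighbors in $I$ are nonempty and the vertices of $I^*\setminus I$ are pairwise non-adjacent, so distinct $v, v'$ could share a neighbor $u \in I$ — to kill this, observe $u \notin I^*$, and deduce $|I^* \setminus I| + |I \cap I^*| \le$ something like $|I|$ when $n$ is small; I expect the final inequality $\ALG \ge \CEIL{n/2}^{-1}\OPT$ to come from $|I| \ge \CEIL{|I^*|/\,?}$ combined with $|I^*| \le n$. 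The honest expectation is that the upper bound follows from: $I$ maximal $\Rightarrow$ every vertex of $V$ has a neighbor in $I$ or lies in $I$, so $V \setminus I$ is dominated by $I$; since $I^* \setminus I$ is independent and each of its vertices has a neighbor in $I \setminus I^*$, and these neighborhoods, while possibly overlapping, force $|I \setminus I^*| \ge 1$ whenever $I^* \setminus I \ne \emptyset$ — this needs the counting to be done carefully, and that is the main obstacle in the upper bound.

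\medskip

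\textbf{Lower bound.} Here I would give an explicit adversarial construction on $n$ vertices forcing every deterministic algorithm (using late rejects) to end with at most $\CEIL{n/2}$ vertices while $\OPT = n$ or close to it. The construction: present vertices in pairs. Present an isolated vertex $v_1$; the algorithm must eventually accept it to be competitive (else the adversary stops and $\ALG = 0$). When $v_1 \in I$, present a new vertex $v_1'$ adjacent only to $v_1$. Now the algorithm is stuck: it may late-reject $v_1$ and accept $v_1'$, but this is a symmetric swap gaining nothing, and at most one of $\{v_1, v_1'\}$ is ever in the algorithm's set at any time — and crucially, once it has committed (after all future vertices adjacent to this gadget are done) at most one survives. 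Repeat this gadget $\CEIL{n/2}$ times with fresh vertices, all gadgets mutually non-adjacent. After $n$ vertices (in $\FLOOR{n/2}$ full gadgets, plus possibly one leftover), the algorithm holds at most one vertex per gadget, i.e.\ at most $\CEIL{n/2}$; but $\OPT$ takes the "child" vertex $v_i'$ from every gadget plus any leftover, getting $n - \FLOOR{n/2} $... wait — $\OPT$ can take both? No: within a gadget $\{v_i, v_i'\}$ the edge forbids both, so $\OPT$ gets exactly one per gadget too. The fix: make the adversary adaptive so that whichever vertex the algorithm keeps, the adversary later attaches many pendant vertices to it, as in the \SM proof, forcing the algorithm to effectively waste a vertex while \OPT does not. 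The main obstacle is handling late rejects: the algorithm can swap within a gadget, so the argument must show that swapping never lets it escape, which I would formalize by an adversary that delays the "pendant flood" until the algorithm's choice within each gadget is forced, exploiting that rejection is irrevocable. I expect this adaptive argument, carefully accounting for the additive constant, to be where the real work lies.
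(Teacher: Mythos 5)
Both halves of your proposal have genuine gaps. For the upper bound, the algorithm you chose to analyze is wrong: the plain greedy algorithm that never late-rejects is only $(n-1)$-competitive, not $\CEIL{n/2}$-competitive. The star $K_{1,n-1}$ with the center presented first is a counterexample: greedy accepts the center, every later vertex is adjacent to it, so greedy ends with one vertex while $\OPT$ takes the $n-1$ leaves. This is exactly the \SM lower-bound construction, and since your algorithm never uses a late reject it inherits that bound. Your counting attempt never closes, and it cannot, because the claim is false for greedy. The point of the \LRM here is that the algorithm must actually use late rejects: accept $v$ when possible, and otherwise, if $v$ has exactly one neighbor $u$ in the current set, swap $u$ out for $v$. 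With this rule, either the algorithm at some point holds two vertices (and then holds at least two forever, giving ratio at most $n/2$), or every presented vertex is adjacent to the unique currently held vertex, so the input is a path in presentation order, on which $\OPT \le \CEIL{n/2}$ while the algorithm holds one vertex.

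For the lower bound, you correctly observe that your pair-gadget construction yields ratio $1$ (both the algorithm and $\OPT$ get one vertex per gadget), but the ``fix'' you propose is only a sketch, and you explicitly defer ``the real work.'' The working adversary is simpler than an adaptive pendant flood per gadget: at every step, present a new vertex adjacent only to the single vertex currently held by the algorithm (and an isolated vertex if it holds nothing). The algorithm can accept the new vertex only by late-rejecting its current one, so it never holds more than one vertex; meanwhile the presented vertices form a path (each new vertex attaches to the previous one), so $\OPT$ takes every other vertex and gets at least $\CEIL{n/2}$. Your instinct to exploit irrevocability of rejection and to chase whichever vertex the algorithm keeps is the right one, but as written you attach the pendants to a fixed vertex $v_1$ ``after the choice is forced,'' and in the \LRM the choice is never forced until the input ends; the adversary must follow the algorithm's currently held vertex at every step.
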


\begin{proof}
For the lower bound, whenever there is at least one vertex $v$ in the current
independent set constructed by \ALG, the adversary presents a vertex
incident only to $v$. The only vertex which can be accepted when \ALG
rejects $v$ is the vertex which just arrived,
so \ALG will never have more than one accepted vertex.
On the other hand, the graph the adversary produces is 
bipartite, so \opt can accept at least half of the vertices. 
 
For the upper bound, consider the following algorithm \ALG: If the
presented vertex $v$ can be added to the independent set $I$ being
constructed, then accept it. Otherwise, if $v$ is adjacent to only one
vertex $u$ in $I$, then remove $u$ from $I$ and add $v$ to $I$. 

By definition, \ALG accepts the first vertex.
If \ALG ever has two accepted
vertices, it will also have at least two accepted vertices at the end, and
the result holds.
Otherwise, consider some vertex~$u$ accepted by \ALG.
If the adversary presented a vertex not adjacent to $u$, \ALG
would accept it without rejecting $u$, which would be a contradiction.
Thus, each vertex presented by the adversary is connected to the
unique vertex currently in $I$.
By definition of the algorithm, in every step, the currently accepted
vertex is rejected and the new one accepted.
Thus, considering the vertices in the order they are presented,
they form a path.
No algorithm can accept more than every second vertex from a path,
and since all vertices are on the path,
\OPT accepts at most $\CEIL{n/2}$ vertices, and the result follows.
\mbox{}\qed\end{proof}

We now show that while allowing late accepts helps further, it is
still not enough to obtain a finite (constant) competitive ratio.

\begin{theorem}
\label{thm:islam}
For Independent Set in the \LAM, the competitive ratio is~$n/\Theta(1)$.
\end{theorem}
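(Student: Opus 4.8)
The plan is to establish both directions of the claimed bound $n/\Theta(1)$, following the notational convention introduced in the excerpt: for the lower bound, exhibit for every online algorithm \ALG a constant $b > 0$ and arbitrarily large inputs forcing a strict competitive ratio of at least $n/b$; for the upper bound, design, for every constant $b$, an $n/b$-competitive algorithm on graphs with at least $b$ vertices. Since the \LRM result already gives $\CEIL{n/2}$ and late accepts should help, the upper bound algorithm should be a natural combination of greedy-style acceptance together with deferred (late) acceptance of vertices that only later become ``safe'' to take, while the lower bound must show this cannot be pushed to a constant.

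For the lower bound I would use an adversary that builds up a structure in phases, exploiting that late accept does \emph{not} let the algorithm undo a commitment — only delay it. The idea: present a large independent set of ``candidate'' vertices; the algorithm may sit on them (late accept), but as soon as it accepts any vertex $v$ (now or late), the adversary attacks by presenting new vertices adjacent to $v$, and more generally adjacent to whatever \ALG has committed to, so that \ALG's committed set can never grow large while \OPT can reject that small committed set and take everything else. The subtlety compared to the standard model is that \ALG can postpone, so the adversary must force a commitment: a standard device is to eventually make all the ``free'' vertices mutually adjacent (or adjacent in a way that forces at most a bounded number to be simultaneously acceptable), so that holding out indefinitely is also fatal. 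Balancing the sizes across phases — roughly $\sqrt{n}$ candidates versus $\sqrt{n}$ phases, or whatever ratio the construction dictates — should yield that \ALG gets only $O(1)$ vertices (or $O(\sqrt n)$, still $o(n)$, which suffices for $n/\Theta(1)$ after rescaling $b$) against \OPT's $\Omega(n)$. I would phrase this so that for each fixed target ratio $n/b$, taking $n$ large enough and the phase count as a function of $b$ gives the bound.

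For the upper bound, given a constant $b$, the algorithm should guarantee an independent set of size at least $\FLOOR{n/b}$ (or at least some fixed number $k$ once $n \ge b$, whichever the convention wants) on every input. A clean approach: run greedy acceptance, but additionally, whenever the current graph seen so far contains an independent set of size $\ge k$ among the not-yet-rejected vertices that is ``compatible'' with the current commitments, late-accept it. Because late accept is available, the algorithm never needs to reject a vertex it might later want; it can keep every vertex alive until the end, and at the end any maximal-over-time greedy choice, or a size-$k$ independent set it has been tracking, is realized. The key structural fact to verify is that an adversary cannot drive the maximum independent set below $k$ while keeping $n$ large without, in effect, already having revealed enough structure for the algorithm to have committed to $k$ independent vertices earlier — i.e. one argues that if \OPT(\sigma) is large then at some prefix the algorithm could safely commit to $k$ of them and, once committed via late accept, these stay. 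This is where I expect the main obstacle: making precise the interplay between ``the algorithm may hold vertices without committing'' and ``the adversary may add edges later between held vertices,'' so that the algorithm's deferral strategy is genuinely safe. The cleanest resolution is probably to have the algorithm commit (late-accept) a size-$k$ independent set as soon as one is present and then simply never touch it again, arguing that the adversary, to shrink \OPT below $k$ \emph{after} that point, would have to add edges inside the committed set, which is impossible since those vertices are already accepted and independent — so the only way to keep \OPT small is to keep $n$ small, contradicting $n \ge b$ with $b$ chosen appropriately relative to $k$.
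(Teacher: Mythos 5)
Your overall plan matches the paper's (threshold algorithm for the upper bound, a commit-then-attack adversary for the lower bound), but both halves contain concrete errors that the actual proof avoids by being much simpler. For the lower bound, the parenthetical claim that confining \ALG to $O(\sqrt{n})$ vertices ``suffices for $n/\Theta(1)$ after rescaling $b$'' is false: the convention requires a ratio of at least $n/b$ for a \emph{constant} $b$, and \OPT getting $\Omega(n)$ against \ALG's $\sqrt{n}$ only yields ratio $\Theta(\sqrt{n})=n/\Theta(\sqrt n)$. You must pin \ALG to $O(1)$ vertices. Also, your device of ``eventually making all the free vertices mutually adjacent'' is unavailable in the vertex arrival model, where all edges between a vertex and earlier vertices arrive with that vertex; you cannot retroactively add edges among already-presented candidates. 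Neither the phases nor the $\sqrt{n}$ balancing is needed: if \ALG never accepts anything on an arbitrarily long sequence of isolated vertices, its ratio is already unbounded, so there is a smallest integer $c$, depending only on \ALG, such that \ALG accepts some vertex $v$ among $c$ isolated vertices. The adversary then presents only degree-one vertices adjacent to $v$; since late reject is forbidden, \ALG is stuck with $v$ and can accept at most the $c$ original vertices, while \OPT drops $v$ and takes the other $n-1$ vertices, giving ratio $(n-1)/c$.

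For the upper bound, your algorithm (wait until an independent set of size $k$ exists among presented vertices, then late-accept one and never touch it) is exactly the paper's, and your worry about the adversary later adding edges inside a held set is moot for the same arrival-model reason as above. But your closing argument --- that to keep \OPT below $k$ the adversary ``would have to keep $n$ small'' --- is wrong: a clique has $n$ arbitrarily large and $\OPT=1$, so no algorithm can guarantee an independent set of size $\FLOOR{n/b}$, nor even of size $k$, on every input with $n\ge b$. The case where no independent set of size $k$ ever appears is instead absorbed by the additive constant in the definition of competitiveness: there $\OPT \le k-1$, so $\alpha = k-1$ suffices and the algorithm is $n/k$-competitive (not strictly so), which is precisely why the theorem is stated for the competitive ratio rather than the strict one. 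Without invoking that additive constant, your upper bound does not go through.
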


\begin{proof}
For a given positive constant $c$, an algorithm which does not 
accept any
vertex until the presented graph has an independent set of size at least~$c$,
and then accepts any such set, is $n/c$-competitive (for a graph with
no independent set of size $c$, $c-1$ suffices for the additive constant).

For the lower bound, consider any algorithm, \ALG.
Let $I_k$ denote a sequence consisting of $k$ independent vertices.
Let $c$ be the smallest integer such that \ALG accepts at
least one vertex of $I_c$.  If there is no such $c$, then the
competitive ratio of \ALG is unbounded. Assume that $v$ is a vertex
accepted in $I_c$. When \ALG accepts $v$, the adversary extends $I_c$
to an arbitrarily long input by presenting vertices of degree one
adjacent only to~$v$. Clearly, \OPT can accept $n-1$ vertices of the
resulting graph, while \ALG can accept at most $c$ vertices.
\mbox{}\qed\end{proof}

The following two theorems show that, in the \LAthenRM, the optimal
competitive ratio for Independent set is $3\sqrt{3}/2$.
The upper bound comes from a variant of the greedy algorithm,
Algorithm~\ref{alg:IS}, 
rejecting a set of vertices if it can be replaced by a set at least
$\sqrt{3}$ as large.
The algorithmic idea is natural and has been used
before (with other parameters than $\sqrt{3}$) in~\cite{RR16,SG09}, for example.
Thus, the challenge lies in deciding the parameter and proving the
resulting competitive ratio.
Pseudocode for Algorithm~\ref{alg:IS} is given below.

For Algorithm~\ref{alg:IS}, we introduce the following notation.
Let $S$ be the current set of vertices that have been accepted and not
late-rejected. 
Let $R$ be the set of vertices that have been late-rejected, and let
$P$ denote the set $V - (R\cup S)$ of vertices that have not been
accepted (and, hence, not late-rejected).

For a set $U$ of vertices, let $N(U) = \cup_{v\in U}N(v)$,
where $N(v)$ is the neighborhood of a vertex $v$ (not including $v$).
We call a set, $T$, of vertices \emph{admissible} if all the following
conditions are satisfied:
\begin{enumerate}[1)]
\item \label{adm:is} $T$ is an independent set;
\item \label{adm:p} $T \subseteq P$;
\item \label{adm:size} $|T| \ge \sqrt 3 \, |N(T)\cap S|$.
\end{enumerate}

\begin{algorithm}[H]
\DontPrintSemicolon
\SetKwInOut{Output}{output}
\KwResult{Independent set $S$}
\BlankLine
	$S=\emptyset$\; 
	\While{a vertex $v$ is presented}{
		\eIf{$S\cup\SET{v}$ is independent}{
			$S = S \cup \SET{v}$ \label{accept} \;
		}{
  			\While{there exists an admissible set }{
                                Let $T$ be an admissible set
                                minimizing $|S \cap N(T)|$\;
				$S = (S - N(T)) \cup T$ \label{late-accept} \;
                         }
		}
	}
\caption{Algorithm for Independent Set in the \LAthenRM.}
\label{alg:IS}
\end{algorithm}

For the analysis of Algorithm~\ref{alg:IS}, we partition $S$ into the
set, $A$, of vertices accepted in 
line~\ref{accept} and the set, $B$, of vertices accepted in
line~\ref{late-accept}.
We let $O$ be the independent set constructed by \opt.
For any set, $U$, of vertices, we let $U^+ = U \cap O$ and $U^- = U - O$.
Thus, $O = P^+ \cup S^+ \cup R^+ = P^+ \cup A^+ \cup B^+ \cup R^+$.

\begin{lemma}
\label{lemma:IS}
When Algorithm~\ref{alg:IS} terminates, $|B| \ge (\sqrt{3}-1)|R|$.
\end{lemma}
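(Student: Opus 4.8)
The plan is to track how the sets $R$ (late-rejected vertices) and $B$ (vertices accepted in line~\ref{late-accept}) grow together over the run of Algorithm~\ref{alg:IS}, and show that each late-accept step contributes enough new vertices to $B$ to pay for the vertices it moves into $R$. The only events that change $R$ are executions of line~\ref{late-accept}: when we replace $S$ by $(S - N(T)) \cup T$ for an admissible set $T$, the vertices in $S \cap N(T)$ become late-rejected, so $R$ grows by exactly $|S \cap N(T)|$ new elements (these were in $S$, hence not previously in $R$). Meanwhile $B$ gains the vertices of $T$ — but one must be careful, because some vertices of $T$ might already have been in $B$ from an earlier late-accept, and some vertices leaving $S$ might be from $A$ rather than $B$. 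So the first key step is to set up the bookkeeping precisely: show that each time line~\ref{late-accept} fires with set $T$, the number of genuinely new vertices added to $B$ is at least $|T| - (\text{something absorbed})$, while $|R|$ increases by at most $|N(T) \cap S|$.

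The cleanest route is to argue that line~\ref{late-accept} never removes a vertex that is currently in $B$; that is, $N(T) \cap S \subseteq A$ always, so $B$ only ever grows. To see why this should hold, note that $T \subseteq P$ by admissibility condition~\ref{adm:p}, and a vertex $b$ placed into $B$ by a previous late-accept is in $S$ at that moment; for $b$ to be in $N(T) \cap S$ now, $b$ would have to be a neighbor of some vertex of $T \subseteq P$. One needs to check that the algorithm's structure — in particular that admissible sets are chosen from $P$, and that once a vertex is in $B$ it stays in $S$ unless itself late-rejected — forces the replacements to cascade in a way that never touches $B$. If this monotonicity of $B$ holds, then over the whole run, $|R| = \sum_i |N(T_i) \cap S_i|$ where the sum is over late-accept steps, and $|B| = \sum_i |T_i \setminus (\text{earlier } B)|$. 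Combining with admissibility condition~\ref{adm:size}, $|T_i| \ge \sqrt{3}\,|N(T_i) \cap S_i|$, should yield $|B| \ge \sqrt{3}|R| - (\text{overlap corrections})$, and the $-|R|$ in the claimed bound $|B| \ge (\sqrt 3 - 1)|R|$ is exactly the slack needed to absorb those corrections (each of the $|R|$ vertices ever late-rejected can be "blamed" for at most one unit of double-counting in $B$).

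An alternative, possibly smoother, formulation is an amortized/potential argument: define a potential such as $\Phi = |B| - (\sqrt 3 - 1)|R|$ and show $\Phi \ge 0$ is maintained. Initially $\Phi = 0$. Accepts in line~\ref{accept} don't change $B$ or $R$. A late-accept with set $T$ changes $\Phi$ by $\Delta|B| - (\sqrt 3 - 1)\Delta|R|$; using $\Delta|R| = |N(T)\cap S|$ and $|T| \ge \sqrt 3 |N(T)\cap S|$, one needs $\Delta|B| \ge |T| - |N(T)\cap S|$, i.e. at most $|N(T)\cap S|$ vertices of $T$ were already in $B$. This last inequality is where the real work sits, and it is the step I expect to be the main obstacle: one must relate the vertices of $T$ that already lie in $B$ to the vertices being newly rejected in the same step, presumably by observing that any vertex of $T$ that was previously late-accepted must since have been late-rejected (to have left $S$), and that a vertex cannot re-enter $S$ via line~\ref{late-accept} since $T \subseteq P$ — so in fact $T \cap B = \emptyset$ at the moment of the step, making $\Delta|B| = |T|$ and the inequality trivial. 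If that observation is correct the lemma falls out immediately; the delicacy is in verifying it against the precise semantics of $P$, $R$, $S$ and the inner while-loop, which is exactly the part a referee would scrutinize.
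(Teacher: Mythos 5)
Your overall strategy---maintain the invariant $|B| \ge (\sqrt{3}-1)|R|$ as a nondecreasing potential, using admissibility condition~\ref{adm:size}) to pay for each increase of $R$---is exactly the paper's proof, and your observation that $T\cap B=\emptyset$ at the moment of a late-accept (because $T\subseteq P$ by condition~\ref{adm:p}), while every vertex that ever left $B$ now sits in $R$, and $P\cap R=\emptyset$) is correct and is indeed needed to see that all of $T$ is new to $B$. The flaw is in the other half of the accounting. The claim you float in your first route, that $N(T)\cap S\subseteq A$ so that ``$B$ only ever grows,'' is false: nothing prevents a later admissible set from having neighbours among previously late-accepted vertices, in which case line~\ref{late-accept} removes those vertices from $B$ and moves them to $R$. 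Consequently your final conclusion that $\Delta|B|=|T|$ ``and the inequality is trivial'' is also wrong; the correct statement is $\Delta|B| = |T| - |B\cap N(T)\cap S|$, since $B$ simultaneously gains $T$ and loses its members that are being late-rejected.

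Fortunately your own setup already contains the repair, and it is a one-line fix rather than a new idea. Writing $Q=N(T)\cap S$, the step increases $|R|$ by exactly $|Q|$ (all of $Q$ was in $S$, hence not previously in $R$) and changes $|B|$ by
\[
|T|-|B\cap Q| \;\ge\; |T|-|Q| \;\ge\; \sqrt{3}\,|Q| - |Q| \;=\; (\sqrt{3}-1)\,|Q|,
\]
so the potential $\Phi=|B|-(\sqrt{3}-1)|R|$ is still nondecreasing. That is, the sufficient condition you correctly derived, $\Delta|B|\ge|T|-|Q|$, holds not because no $B$-vertex is ever rejected, but because at most $|Q|$ of them are rejected in this step, and the slack of $1$ per rejected vertex between the factor $\sqrt{3}$ in condition~\ref{adm:size}) and the factor $\sqrt{3}-1$ in the lemma absorbs exactly that loss. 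This is precisely the accounting in the paper's proof (each time $|R|$ grows by $x$, $|B|$ grows by at least $\sqrt{3}x-x$). Drop the attempt to prove $N(T)\cap S\subseteq A$: it is both false and unnecessary.
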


\begin{proof}
Clearly, the inequality is true before the first vertex is presented.
Each time a set, $X$, of vertices is moved from $S = A \cup B$ to $R$, a set at
least $\sqrt{3}$ times as large as $X$ is added to $B$.
Thus, each time the size of $R$ increases by some number $x$, the size
of $B$ increases by at least $\sqrt{3}x-x$.
The result follows inductively.  
\mbox{}\qed\end{proof}

\begin{lemma}
\label{lemma:pplus}
When Algorithm~\ref{alg:IS} terminates, $|P^+| < \sqrt{3} \, |S^-|$.
\end{lemma}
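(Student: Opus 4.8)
The plan is to test the admissibility conditions stated just before Algorithm~\ref{alg:IS} on a single, carefully chosen candidate set, namely $T=P^+$, and then to read the claimed bound directly off the fact that the algorithm has halted. So the first step is to record that when Algorithm~\ref{alg:IS} terminates there is no admissible set: the inner \textbf{while} loop only exits once this is the case, and executing line~\ref{accept} merely deletes a vertex from $P$ and inserts it into $S$, which can only shrink the collection of admissible sets (a larger $S$ and a smaller $P$ make conditions~\ref{adm:p} and~\ref{adm:size} harder to satisfy), so the property persists regardless of whether the last request was handled in line~\ref{accept} or line~\ref{late-accept}.

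Next I would verify that $P^+$ is a legitimate candidate. Since $P^+=P\cap O$ and $O$ is an independent set, $P^+$ is independent, giving condition~\ref{adm:is}; and $P^+\subseteq P$ by definition, giving condition~\ref{adm:p}. As no admissible set exists at termination, condition~\ref{adm:size} must therefore fail for $T=P^+$, i.e.\ $|P^+|<\sqrt{3}\,|N(P^+)\cap S|$. The remaining step is the bound $|N(P^+)\cap S|\le|S^-|$: any $w\in S$ adjacent to some $u\in P^+\subseteq O$ cannot itself lie in $O$, since $O$ is independent, so $w\in S-O=S^-$; hence $N(P^+)\cap S\subseteq S^-$. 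Chaining the two inequalities yields $|P^+|<\sqrt{3}\,|S^-|$.

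This argument is short, so there is no serious obstacle; the only point needing a little care is the degenerate case $P^+=\emptyset$, in which the asserted strict inequality amounts to $S^-\neq\emptyset$. This is harmless for the way the lemma is used — the competitive ratio is stated up to an additive constant that absorbs trivial instances — and it is consistent with taking admissible sets to be nonempty (which is anyway forced, since otherwise the empty set would prevent the inner loop from terminating), so that the substantive content of the lemma concerns the case $P^+\neq\emptyset$.
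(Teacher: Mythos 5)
Your proposal is correct and follows essentially the same route as the paper's proof: test $T=P^+$ against the admissibility conditions at termination, conclude that condition~\ref{adm:size}) must fail, and bound $|N(P^+)\cap S|$ by $|S^-|$ using the independence of $O$. Your additional remarks (why no admissible set survives a greedy accept in line~\ref{accept}, and the degenerate case $P^+=\emptyset$) are sensible points of rigor that the paper leaves implicit, but they do not change the argument.
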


\begin{proof}
When the algorithm terminates, there are no admissible sets.
This means, in particular, that $P^+$ is not admissible.
Trivially, $P^+$ does not violate~\ref{adm:p}).
Furthermore, since $P^+ \subseteq O$, it cannot
violate~\ref{adm:is}).
Thus, we conclude from~\ref{adm:size}) that
$|P^+| <  \sqrt{3} |N(P^+) \cap S| \le \sqrt{3} |S^-|$, 
where the last inequality follows from the fact that there are no
edges between $P^+$ and $S^+$, since $P^+ \cup S^+ \subseteq O$.
\mbox{}\qed\end{proof}

\begin{lemma}
\label{lemma:brminus}
When Algorithm~\ref{alg:IS} terminates, $|B^-|+|R^-| \ge \sqrt{3} \, |R^+|$.
\end{lemma}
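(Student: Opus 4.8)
The plan is to decompose the execution of Algorithm~\ref{alg:IS} into the successive executions of line~\ref{late-accept}, which I will call \emph{late-accept operations}, and to account for $R^+$ one operation at a time. For the $i$th such operation, let $T_i$ be the admissible set chosen, let $S_i$ be the value of $S$ just before the operation, and let $X_i = N(T_i)\cap S_i$ be the set of vertices moved from $S$ to $R$ at that operation, so that admissibility gives $|T_i|\ge\sqrt{3}\,|X_i|$. Two bookkeeping facts set things up. Since late rejects are irrevocable, a vertex never re-enters $S$ after leaving it, so the sets $T_i$ are pairwise disjoint and $R$ is the disjoint union of the sets $X_i$; in particular $|R^+|=\sum_i|X_i^+|$. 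Furthermore, every vertex of $\bigcup_i T_i$ that is not in $O$ is, at termination, either still in $S$ and hence in $B^-$, or in $R$ and hence in $R^-$; combined with disjointness, this gives $\sum_i|T_i^-|\le|B^-|+|R^-|$. It therefore suffices to prove, for every operation $i$, the local inequality $|T_i^-|\ge\sqrt{3}\,|X_i^+|$, and then sum.

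The proof of the local inequality is where I use that Algorithm~\ref{alg:IS} picks an admissible set \emph{minimizing} $|S\cap N(T)|$. Fix $i$ and assume $X_i^+\neq\emptyset$, the other case being trivial. Let $M$ be the set of vertices of $T_i$ that have a neighbor in $X_i^+$. Since $X_i^+\subseteq O$ and $O$ is independent, $M$ contains no vertex of $O$, so $M\subseteq T_i^-$. Consider the set $T_i''=T_i\setminus M$: it is an independent subset of $P$, and by the choice of $M$ no vertex of $X_i^+$ has a neighbor in $T_i''$, so $N(T_i'')\cap S_i\subseteq X_i^-$ and hence $|N(T_i'')\cap S_i|\le|X_i^-|<|X_i|$. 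If $T_i''$ were a nonempty admissible set at this step, it would be an admissible set with a strictly smaller value of $|S\cap N(T)|$ than $T_i$, contradicting the choice of $T_i$. So either $T_i''=\emptyset$, in which case $T_i=M\subseteq T_i^-$ and $|T_i^-|=|T_i|\ge\sqrt{3}\,|X_i|\ge\sqrt{3}\,|X_i^+|$; or $T_i''$ violates condition~\ref{adm:size}), so that $|T_i''|<\sqrt{3}\,|N(T_i'')\cap S_i|\le\sqrt{3}\,|X_i^-|$, and then combining $|T_i|=|T_i''|+|M|$ with $|T_i|\ge\sqrt{3}\,|X_i|=\sqrt{3}\,|X_i^+|+\sqrt{3}\,|X_i^-|$ yields $|M|>\sqrt{3}\,|X_i^+|$, whence $|T_i^-|\ge|M|>\sqrt{3}\,|X_i^+|$.

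Summing the local inequality over all operations and invoking the two bookkeeping facts gives $\sqrt{3}\,|R^+|=\sum_i\sqrt{3}\,|X_i^+|\le\sum_i|T_i^-|\le|B^-|+|R^-|$, as claimed. The main obstacle is precisely the local inequality: it is tempting to try to derive $|T_i^-|\ge\sqrt{3}\,|X_i^+|$ from $|T_i|\ge\sqrt{3}\,|X_i|$ and the independence of $O$ alone, but this fails, because a single vertex of $T_i^-$ can be adjacent to arbitrarily many vertices of $X_i^+$ while $|X_i|$ --- and thus the lower bound forced on $|T_i|$ --- stays small. The minimality of $T_i$ is exactly what excludes this, and $T_i\setminus M$ is the alternative set that turns this minimality into the bound. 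Some minor care is also needed regarding the convention that the admissible sets considered by the algorithm are nonempty, and regarding the claim that each late-accepted vertex outside $O$ contributes to exactly one of $B^-$ and $R^-$.
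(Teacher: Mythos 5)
Your proof is correct and takes essentially the same route as the paper's: the identical per-operation inequality $|T_i^-|\ge\sqrt{3}\,|X_i^+|$ (the paper's Ineq.~(\ref{eq:tminus}) with $Q=X_i$), established via the minimality of $|S\cap N(T)|$ and then summed over all late-accept operations using the same two bookkeeping facts. The only difference is cosmetic: the paper takes $T_i^+$ itself as the alternative candidate whose neighborhood in $S$ avoids $X_i^+$ (because $O$ is independent), whereas you take the slightly larger set $T_i\setminus M$; both arguments conclude that the candidate must violate condition~\ref{adm:size}) and subtract the resulting bound from $|T_i|\ge\sqrt{3}\,|X_i|$.
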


\begin{proof}
Consider a set, $T$, added to $B$ in line~\ref{late-accept}.
Let $Q = N(T) \cap S$.
We prove that 
\begin{equation}
\label{eq:tminus}
|T^-| \geq \sqrt{3} |Q^+|
\end{equation}
If $|Q^+|=0$, this is trivially true.
Thus, we can assume that $Q^-$ is a proper subset of $Q$.
Since $T$ is admissible, it follows that
\begin{equation}
\label{eq:tq}
|T| \geq \sqrt{3} |Q|
\end{equation}
Note that $(S - Q^-) \cup T^+$ is independent, since $(S-Q) \cup T$ is
independent and there are no edges between $Q^+$ and $T^+$.
Since the algorithm chooses $T$ such that $|Q|$ is minimized, this means
that 
\begin{equation}
\label{eq:tplus}
|T^+| < \sqrt{3} |Q^-|
\end{equation}
Subtracting Ineq.~(\ref{eq:tplus}) from Ineq.~(\ref{eq:tq}),
we obtain Ineq.~(\ref{eq:tminus}).

Let $T_1, T_2, \ldots, T_k$ be all the admissible sets that are chosen
in line~\ref{late-accept} during the run of the algorithm, and let
$Q_1, Q_2, \ldots , Q_k$ be the corresponding sets that are removed
from $S$.
Then, $\cup_{i=1}^k T_i \subseteq B \cup R$, and
thus, $\cup_{i=1}^k T_i^- \subseteq B^- \cup R^-$.
Furthermore, $R = \cup_{i=1}^k Q_i$.
Hence, $$|B^-|+|R^-| \geq \sum_{i=1}^k |T_i^-| \geq 
\sum_{i=1}^k \sqrt{3} |Q_i^+| = \sqrt{3} |R^+|,$$
where the second inequality follows from Ineq.~(\ref{eq:tminus}).
\mbox{}\qed\end{proof}

\begin{lemma}
\label{lemma:brplus}
When Algorithm~\ref{alg:IS} terminates,
$|B^+|+|R^+| \le \frac{\sqrt{3}}{\sqrt{3}+1}|B^+| + \frac{\sqrt{3}}{2}|B|$.
\end{lemma}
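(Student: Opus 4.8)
The plan is to combine the three preceding lemmas, using Lemma~\ref{lemma:brminus} as the main ingredient and Lemma~\ref{lemma:IS} as the device for eliminating the quantities that do not appear in the target. Since $\frac{\sqrt{3}}{\sqrt{3}+1}-1 = -\frac{1}{\sqrt{3}+1}$, moving $|B^+|$ to the right-hand side shows that the claimed inequality is equivalent to
\begin{equation}
|R^+| + \frac{1}{\sqrt{3}+1}\,|B^+| \le \frac{\sqrt{3}}{2}\,|B|,
\end{equation}
so it suffices to prove this form and then recombine the $|B^+|$ coefficients at the end.

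First I would turn Lemma~\ref{lemma:brminus} into a statement about $|B^-|$ alone. Since $R^+$ and $R^-$ partition $R$, we have $|R^-| = |R| - |R^+|$, and substituting into $|B^-|+|R^-| \ge \sqrt{3}\,|R^+|$ gives $|B^-| \ge (\sqrt{3}+1)|R^+| - |R|$. Next I would apply Lemma~\ref{lemma:IS} in the form $|R| \le |B|/(\sqrt{3}-1)$ to get rid of $|R|$, obtaining $|B^-| \ge (\sqrt{3}+1)|R^+| - |B|/(\sqrt{3}-1)$. Adding $|B^+|$ to both sides and using that $B^+,B^-$ partition $B$ rewrites this as $|B| + |B|/(\sqrt{3}-1) \ge |B^+| + (\sqrt{3}+1)|R^+|$, i.e. $\frac{\sqrt{3}}{\sqrt{3}-1}\,|B| \ge |B^+| + (\sqrt{3}+1)|R^+|$.

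The final step is to divide through by $\sqrt{3}+1$ and use the identity $(\sqrt{3}-1)(\sqrt{3}+1)=2$, which yields exactly the displayed inequality above; moving $\frac{1}{\sqrt{3}+1}|B^+|$ back to the left-hand side and simplifying $1-\frac{1}{\sqrt{3}+1} = \frac{\sqrt{3}}{\sqrt{3}+1}$ recovers the statement of the lemma. I do not expect a real obstacle: the one observation that makes it go through is converting Lemma~\ref{lemma:brminus} into a bound on $|B^-|$ via $|R^-| = |R| - |R^+|$, and after that it is bookkeeping around $(\sqrt{3}-1)(\sqrt{3}+1)=2$. The only point needing minor care is the direction of the inequalities: Lemma~\ref{lemma:pplus} is not needed here, and the two lemmas actually used are non-strict, so the conclusion is the non-strict ``$\le$'' as stated.
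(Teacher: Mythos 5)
Your proof is correct and is essentially the paper's own argument: both rely only on Lemma~\ref{lemma:brminus} and Lemma~\ref{lemma:IS}, use $|B^-|=|B|-|B^+|$ and $|R^-|=|R|-|R^+|$ to eliminate the minus-quantities, and finish by dividing by $\sqrt{3}+1$ via $(\sqrt{3}-1)(\sqrt{3}+1)=2$. The only difference is cosmetic ordering of the algebra (the paper multiplies the target by $\sqrt{3}+1$ up front rather than isolating $|B^-|$ first).
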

\begin{proof}
Since $|B^+| = |B| - |B^-|$ and $|R^+| =  |R| - |R^-|$,
we obtain the following.
\begin{align*}
(\sqrt{3}+1)(|B^+|+|R^+|) \;
&  =  \; \sqrt{3} \big(|B^+|+|R^+|\big) + 
         1 \cdot \big(|B|+|R| - (|B^-|+|R^-|)\big) \\
& \le \; \sqrt{3} |B^+| + \sqrt{3} |R^+| + |B|+|R| - \sqrt{3}|R^+|, 
         \text{ by Lemma~\ref{lemma:brminus}}\\
& =   \; \sqrt{3}|B^+|+|B|+|R| \\
& \le \; \sqrt{3}|B^+|+|B|+ \frac{1}{\sqrt{3}-1}|B|,
         \text{ by Lemma~\ref{lemma:IS}} \\
&  =  \; \sqrt{3}|B^+| + \frac{\sqrt{3}}{\sqrt{3}-1}|B|
\end{align*}
The result now follows by dividing both sides by $\sqrt{3}+1$.
\mbox{}\qed\end{proof}

\begin{theorem}
\label{thm:larIS-alg}
For Independent Set in the \LAthenRM,
\algis is strictly $3\sqrt{3}/2$-competitive.
\end{theorem}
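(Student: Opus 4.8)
The plan is to assemble the preceding lemmas into a single chain of inequalities comparing $\OPT(\sigma) = |O|$ with $\ALG(\sigma) = |S| = |A| + |B|$. Since the algorithm only accepts a vertex when doing so keeps $S$ independent, and only performs the update $S = (S - N(T)) \cup T$ with $T$ an independent subset of $P$, the set $S$ it outputs is a legal independent set; and since each (non-vacuous) execution of line~\ref{late-accept} strictly increases $|S| \le n$, the algorithm terminates. So it suffices to prove $|O| \le \tfrac{3\sqrt 3}{2}|S|$. Using that $O = P^+ \cup A^+ \cup B^+ \cup R^+$ is a disjoint union, this amounts to bounding $|P^+| + |A^+| + |B^+| + |R^+|$.

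I would first apply Lemma~\ref{lemma:pplus} to replace $|P^+|$ by the larger quantity $\sqrt 3\,|S^-| = \sqrt 3\,|A^-| + \sqrt 3\,|B^-|$, and then apply Lemma~\ref{lemma:brplus} to bound $|B^+| + |R^+|$, obtaining
\begin{equation*}
|O| \le \sqrt 3\,|A^-| + |A^+| + \sqrt 3\,|B^-| + \tfrac{\sqrt 3}{\sqrt 3 + 1}\,|B^+| + \tfrac{\sqrt 3}{2}\,|B| .
\end{equation*}
From here everything is regrouping. Because $|A| = |A^+| + |A^-|$ and $1 \le \sqrt 3$, the $A$-terms give $\sqrt 3\,|A^-| + |A^+| \le \sqrt 3\,|A|$. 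Because $|B| = |B^+| + |B^-|$ and $\tfrac{\sqrt 3}{\sqrt 3 + 1} \le \sqrt 3$, the first two $B$-terms give $\sqrt 3\,|B^-| + \tfrac{\sqrt 3}{\sqrt 3 + 1}\,|B^+| \le \sqrt 3\,|B|$, so all three $B$-terms together contribute at most $\sqrt 3\,|B| + \tfrac{\sqrt 3}{2}\,|B| = \tfrac{3\sqrt 3}{2}\,|B|$. Hence $|O| \le \sqrt 3\,|A| + \tfrac{3\sqrt 3}{2}\,|B| \le \tfrac{3\sqrt 3}{2}(|A| + |B|) = \tfrac{3\sqrt 3}{2}\,|S|$, which is the desired bound with additive constant $0$.

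I do not expect any real difficulty in this last assembly step: all of the substance has already been spent on Lemmas~\ref{lemma:IS}--\ref{lemma:brplus} (notably the minimality argument for $T$ underpinning Lemma~\ref{lemma:brminus}) and, above all, on the choice of the threshold $\sqrt 3$ in the definition of admissibility. That threshold is tuned precisely so that the coefficient $\sqrt 3$ on $|B^-|$ coming out of Lemma~\ref{lemma:pplus} plus the coefficient $\tfrac{\sqrt 3}{2}$ on $|B|$ coming out of Lemma~\ref{lemma:brplus} sum to $\tfrac{3\sqrt 3}{2}$, which also dominates the coefficient $\sqrt 3$ produced by the $A$-part. The only points worth a second glance are that each cited inequality is oriented so that the substitutions above only ever enlarge the right-hand side, and that the degenerate cases ($|S| = 0$, or $P^+ = \emptyset$ in Lemma~\ref{lemma:pplus}) do not invalidate the final inequality — neither does.
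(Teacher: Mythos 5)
Your proposal is correct and follows essentially the same route as the paper's proof: the identical chain of substitutions using Lemmas~\ref{lemma:pplus} and~\ref{lemma:brplus}, absorbing $|A^+|$ into $\sqrt{3}|A|$ and $\frac{\sqrt{3}}{\sqrt{3}+1}|B^+|$ into $\sqrt{3}|B|$, only presented in a slightly different order of regrouping. The added remarks on termination and on $S$ being a legal independent set are fine but not needed for the bound itself.
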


\begin{proof}
We prove that $|O| \le \frac{3\sqrt{3}}{2} |S|$, establishing the result.
\begin{align*}
\opt \;
&  =  \; |P^+|+|A^+|+|B^+|+|R^+| \\
& \le \; \sqrt{3}(|A^-| + |B^-|) + |A^+| + |B^+| + |R^+|, 
         \text{ by Lemma~\ref{lemma:pplus}} \\ 
& \le \; \sqrt{3}(|A| + |B^-|) + |B^+| + |R^+|, 
         \text{ since } |A|=|A^+|+|A^-|\\
& \le \; \sqrt{3}\left(|A| + |B^-| + \frac{1}{\sqrt{3}+1}|B^+| +
                       \frac{1}{2}|B| \right),
         \text{ by Lemma~\ref{lemma:brplus}}\\
& \le \; \sqrt{3}\left(|A| + |B| + \frac{1}{2}|B| \right),
         \text{ since } \frac{1}{\sqrt{3}+1} < 1\\
& \le \; \frac{3 \sqrt{3}}{2} (|A|+|B|),
         \text{ since } \frac12 |B| \le \frac12 (|A|+|B|)
\end{align*}
\mbox{}\qed\end{proof}

We prove a matching lower bound:

\begin{theorem}
\label{thm:larIS-lb}
For Independent Set in the \LAthenRM, the competitive ratio is at
least~$3\sqrt{3}/2$.
\end{theorem}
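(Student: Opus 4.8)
The plan is to construct, for every online algorithm \ALG\ in the \LAthenRM, an adversarial input on which \ALG\ cannot do better than roughly $\frac{2}{3\sqrt3}$ times optimal. The natural idea is to mirror the algorithmic threshold $\sqrt3$: the adversary should force \ALG\ to late-accept a set $T$ only after \ALG\ has committed to a current solution of roughly $|T|/\sqrt3$ vertices, and then present a further independent set that \ALG\ can no longer touch. I would look for a two-phase (more likely three-phase) construction. In the first phase, the adversary feeds independent vertices until \ALG\ has accepted a block of some size $a$; as in the \LAM\ lower bound (Theorem~\ref{thm:islam}), if \ALG\ never accepts, the ratio is unbounded, so we may assume it does. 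The subtlety over the \LAM\ case is that here \ALG\ can late-reject, so the adversary cannot simply ``pin'' \ALG\ with a degree-one vertex; instead the adversary must make late-rejection expensive by ensuring that any set \ALG\ would want to swap in is only barely admissible.

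Concretely, I would have the adversary present, adjacent to \ALG's current accepted set $S$, a fresh independent set $T_1$ of size just below $\sqrt3\,|S|$, forcing \ALG\ to keep $S$ (late-accepting $T_1$ would violate the ``gain at least $\sqrt3$'' barrier, and simply adding $T_1$ is impossible since it is adjacent to $S$). Then \OPT\ abandons $S$ and takes $T_1$. If \ALG\ refuses $T_1$ entirely, we have already gained a factor close to $\sqrt3$, but not $\frac{3\sqrt3}{2}$, so a second round is needed: the adversary presents, adjacent only to the part of $T_1$ that \ALG\ did \emph{not} eventually end up using, another independent set $T_2$, again sized at the threshold relative to the relevant accepted portion, and so on. The bookkeeping is to track, after all rounds, the total accepted by \ALG\ versus the total independent set available to \OPT, and to choose the round sizes (geometrically, with ratio governed by $\sqrt3$) so that the worst case over \ALG's choices is exactly $\frac{3\sqrt3}{2}$. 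The factor $\frac{3\sqrt3}{2} = \sqrt3\cdot\frac32$ strongly suggests the extremal instance splits \ALG's final solution into a part of relative weight roughly $\frac13$ (matching the $\frac12|B|$ slack in Theorem~\ref{thm:larIS-alg}) and that the tight gadget is essentially the equality case of Lemmas~\ref{lemma:pplus}--\ref{lemma:brplus} read backwards.

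The key steps, in order, would be: (1) set up the multi-round gadget with a parameter controlling block sizes and an integer scaling parameter $m$ so that ``just below $\sqrt3\,|S|$'' can be realized exactly by integers and the additive constant is absorbed; (2) argue that at each round \ALG's only non-dominated options are ``keep current $S$'' or ``swap in (part of) the new block'', and in either case derive a recurrence for $(\text{\ALG's holdings},\ \text{\OPT's holdings})$; (3) show the adversary can always continue so that the ratio approaches the root of the recurrence, and optimize the block-size parameter to make that root equal $\frac{3\sqrt3}{2}$; (4) conclude that for every $b$ there is an input with enough vertices on which $\OPT \ge (\frac{3\sqrt3}{2}-b)\,\ALG$, matching the upper bound and pinning the competitive ratio (and noting that, unlike Theorems~\ref{thm:islam} and~\ref{rej-vc}, this bound does not degrade with $n$, consistent with the constant in Table~\ref{table:results}).

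The main obstacle I anticipate is step~(2): because late-rejection is allowed and can be applied repeatedly, \ALG\ has a rich strategy space — it may swap in only a fraction of an offered block, or swap partially now and partially later — so the adversary must offer blocks whose ``admissible sub-structure'' is controlled, i.e. every subset \ALG\ might late-accept carries with it a proportional cost in vertices it must late-reject. Designing the edge set between consecutive blocks so that this proportionality is tight for all subsets \ALG\ could choose (not just the full block) is the delicate part, and it is presumably where the exact value $\sqrt3$ — as opposed to any other threshold — becomes forced. I would expect the final construction to use a carefully layered graph where block $i+1$ is attached to block $i$ via a near-complete bipartite-like pattern scaled by $\sqrt3$, so that the recursion closes cleanly; verifying that \OPT\ really attains the claimed value on this graph (take the last block plus whatever earlier vertices remain independent from it) is then a routine check.
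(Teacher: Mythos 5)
Your plan has the right ambient structure (fresh independent blocks made adjacent to everything \ALG currently holds, an unbounded ratio if \ALG never accepts, a scaling parameter to absorb additive constants, and \OPT collecting an independent union across blocks), but the central mechanism is both missing and, as framed, misdirected. You propose that the adversary size each new block ``just below $\sqrt3\,|S|$'' so that \ALG is \emph{forced} to keep $S$ because ``late-accepting $T_1$ would violate the `gain at least $\sqrt3$' barrier.'' That barrier is a design choice of Algorithm~\ref{alg:IS}, not a law governing arbitrary algorithms: a generic \LAthenR algorithm may swap its current set for a new block of any size whatsoever, so the adversary cannot pin \ALG this way. The paper's proof derives its leverage from the opposite direction: it lets \ALG behave arbitrarily and uses only the assumption that \ALG is $c$-competitive \emph{at every prefix}. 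Competitiveness is what forces \ALG to hold at least $|B|/c$ vertices of its current bag and therefore to keep switching bags as new bags grow, yielding the inequality $(c+\eps)a_{j-1} - s_{j-2} \ge a_j$ relating consecutive bags, where $s_j$ is the \emph{alternating} sum $a_j + a_{j-2} + \cdots$ (because \OPT can take every other bag along the chain, those bags being mutually non-adjacent by the construction). Your proposal never identifies this alternating structure, which is where the difficulty actually lives.

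The second gap is quantitative. You assert that a geometric choice of block sizes ``with ratio governed by $\sqrt3$'' can be optimized to make some recurrence root equal $\frac{3\sqrt3}{2}$, reading the value off as $\sqrt3\cdot\frac32$ from the upper-bound lemmas. In the actual proof the constant emerges quite differently: the chain of strengthened inequalities produces a \emph{third-order} linear recurrence $g_{i+1} = (c+\eps)(g_i - g_{i-2})$, and the contradiction ($s_m > c\,a_m$ for some $m$) is obtained by showing that every solution oscillates and hence goes negative, which by the cited oscillation theorem happens exactly when the characteristic polynomial $r^3 - (c+\eps)r^2 + (c+\eps)$ has no positive real root; $\frac{3\sqrt3}{2}$ is precisely the threshold at which a positive (double) root at $\sqrt3$ appears. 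A two-term geometric recursion of the kind you sketch would not produce this cubic or this constant. You correctly anticipated that step (2) is the hard part, but the resolution is not a cleverer gadget controlling which subsets \ALG may swap in -- it is replacing the assumed threshold behavior of \ALG with consequences of the competitiveness hypothesis itself, plus the tree/chain bookkeeping and the oscillation argument. As it stands the proposal does not constitute a proof.
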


\begin{proof}
Assume that \ALG is strictly $c$-competitive for some $c>1$.
We first show that $c$ is at least $3\sqrt{3}/2$ and then lift the
strictness restriction. Assume for the sake of contradiction that
$c<3\sqrt{3}/2$.

Incrementally, we construct an input consisting of a collection of
bags, where each bag is an independent set.  Whenever a new vertex $v$
belonging to some bag $B$ is given, we make it adjacent to every
vertex not in $B$, except vertices that have been late-rejected by
\ALG. Thus, if \ALG accepts $v$, it cannot hold any vertex in any
other bag. This 
implies that the currently accepted vertices of \ALG always form a 
subset of a single bag, which we refer to as \emph{\ALG's bag}, and
this is the crucial invariant in the proof. 
We say that \ALG \emph{switches} when it rejects the vertices of its
current bag and accepts vertices of a different bag.

For the incremental construction, the first bag is special in the
sense that \ALG cannot switch to another bag. We discuss later when we
decide to create the second bag, but all we will need is that the
first bag is large enough.  From the point where we have created a
second bag, \ALG has the option of switching.  Whenever \ALG switches
to a bag, $B'$, we start the next bag, $B''$.  All
that this means is that the vertices we give from this point on and
until \ALG switches bag again belong to $B''$, and \ALG
never holds vertices in the newest bag.

Now we argue that as long as we keep giving vertices, \ALG will
repeatedly have to switch bag in order to be $c$-competitive.  Choose
some $\eps > 0$, let $B$ be \ALG's bag, $B'$ be the new bag, and $s$
be the number of vertices which are not adjacent to any vertices in
$B'$.  If \ALG has accepted $a$ vertices of $B$ after $(c+\eps)a - s$
vertices of the new bag $B'$ have been given, \ALG has to accept at
least one additional vertex to be $c$-competitive, since at this point 
\OPT could accept all of the vertices in $B'$ and $s$ additional vertices.  
Since $B'$ is the new bag, $B$ has reached its final size, so eventually 
\ALG will have to 
switch to a different bag.

For the proof, we keep track of relevant parts of the behavior of \ALG
using a tree structure.  The first bag is the root of the tree. Recall
that whenever \ALG switches to a bag, say $X$, we start a new bag $Y$. 
In our tree structure we make $Y$ a child of $X$.

Since \ALG is $c$-competitive and always holds vertices only from a
single bag $B$, the number $a$ of vertices held in $B$ satisfies $a\ge |B|/c$.
Since, by assumption, $c<3$, it follows that \ALG can accept and then
reject disjoint sets of vertices of $B$ at most twice, or
equivalently, that each bag in the tree has at most two children.  As
we proved above, \ALG will have to keep switching bags, so if we keep
giving vertices, this will eventually lead to leaves arbitrarily far
from the root.

Consider a bag $B_m$ that \ALG holds
after a ``long enough'' sequence has been presented.

Label the bags from the root to \ALG's bag by
$B_1,\ldots, B_m$, where $B_{i+1}$ is a child of $B_i$ for each
$i = 1,\ldots, m-1$.  Let $a_j$, $1 \leq j < m$, be the number of vertices
of $B_j$ held by \ALG immediately before it rejected already accepted
vertices from $B_j$ for the first time and let $a_m$ be the number of
vertices currently accepted in $B_m$.  Let $n_j=|B_j|$, $1 \leq j \leq m$.

Furthermore, for each $j$, if $j$ is even,
let $s_j = a_2 + a_4 + \cdots + a_j$, and if $j$ is odd,
let $s_j = a_1 + a_3 + \cdots + a_j$. 
Note that our choice of adjacencies between bags implies that
\OPT can hold at least $s_j$ vertices in bags $B_1, B_2,\ldots, B_j$.

Thus, 
just before \ALG rejects the vertices in $B_{j-1}$ (just before
the $n_j$th vertex of $B_j$ is given),
we must have 
$c a_{j-1} \geq n_j - 1 + s_{j-2}$, by the assumption that \ALG is
$c$-competitive.
We want to introduce the arbitrarily small $\eps$ chosen above and 
eliminate the ``$-1$'' in this inequality: Since \OPT can always hold the
$a_1$ vertices from the root bag, $c a_j \geq a_1$ must hold for all
$j$.  Since $a_1 \geq (n_1-1)/c$, we get that $a_j \geq (n_1-1) / c^2$.
Thus, at the beginning of the input sequence, we can keep giving
vertices for the first bag,
making $n_1$ large enough such that $a_j$ becomes large enough
that $\eps a_{j-1} \geq 1$.  This establishes
$(c + \eps) a_{j-1} \geq n_j + s_{j-2}$.  Trivially, $n_j \geq a_j$, so
\begin{equation}
\label{eq:aj-1_aj}
(c + \eps) a_{j-1} - s_{j-2} \geq a_j.
\end{equation}
Next, we want to show that for any $1\leq c<3\sqrt{3}/2$, there exists an $m$
such that
\begin{equation}
\label{eq:opt}
s_m > ca_m,
\end{equation}
contradicting the assumption that $\ALG$ was $c$-competitive.
To accomplish this, we repeatedly strengthen Ineq.~(\ref{eq:opt}) by
replacing $a_j$ with the bound from Ineq.~(\ref{eq:aj-1_aj}),
eventually arriving at an inequality which can be proven to hold, and
then this will imply all the strengthened inequalities and, finally,
Ineq.~(\ref{eq:opt}).

From Ineq.~\ref{eq:opt}, we first use the definition of $s_m$
and collect the $a_m$ terms to get
\begin{equation}
\label{eq:om-2}
s_{m-2} > (c-1)a_m
\end{equation}
and then we use Ineq.~(\ref{eq:aj-1_aj}) to obtain the following
inequality, which implies Ineq.~(\ref{eq:om-2}):
\begin{equation*}
s_{m-2} > (c-1)(c+\eps)a_{m-1} - (c-1)s_{m-2}.
\end{equation*}
Collecting $s_{m-2}$ terms gives
\begin{equation*}
cs_{m-2} > (c-1)(c+\eps)a_{m-1},
\end{equation*}
and, using Ineq.~(\ref{eq:aj-1_aj}) again, we get a stronger inequality 
\begin{equation*}
cs_{m-2} > (c-1)(c+\eps)^2a_{m-2} - (c-1)(c+\eps)s_{m-3},
\end{equation*}
and, after moving $s_{m-3}$,
\begin{equation}
cs_{m-2} + (c-1)(c+\eps)s_{m-3} > (c-1)(c+\eps)^2a_{m-2}.
\end{equation}
We proceed by labeling the coefficients of $s_j$ and $a_j$
in these inequalities of the form
\begin{equation}
\label{generalized}
f_is_{m-i} + f_{i+1}s_{m-(i+1)} \ge g_ia_{m-i}
\end{equation}
by sequences $\SET{f_k}_{k=0}^{\infty}$, respectively
$\SET{g_k}_{k=0}^{\infty}$. We reverse the order of $f$ and $g$ so the
index $k$ corresponds to $k$ applications of Ineq.~(\ref{eq:aj-1_aj}),
and simplifications as in the above.  Therefore, $f_k$ and $g_k$ are
the coefficients of $s_{m-k}$ and $a_{m-k}$, respectively
(recall that $m$ is fixed).

Our repeated rewriting of $a_j$ and $s_j$ terms to terms with smaller
indices will eventually lead to $a_1$, which is a constant, and to
$s_0$, which is zero. The above calculations show that
\begin{eqnarray*}
f_0 & = & 1 \\
f_1 & = & 0 \\
f_2 & = & c \\
f_3 & = & (c-1)(c+\eps) \\
g_0 & = & c \\
g_1 & = & (c-1)(c+\eps) \\
g_2 & = & (c-1)(c+\eps)^2
\end{eqnarray*}
Our aim is to show that the coefficients $f_k$ and $g_k$ satisfy
\begin{eqnarray}
\label{eq:rec-g}
g_{i+1} & = & (c+\eps)(g_i - f_i) \\
\label{eq:rec-f}
f_{i+2} & = & g_i 
\end{eqnarray}
With the derived constants for $f$ and $g$ with small indexes given
above as the base case, we proceed by induction, assuming that for
$i$, the coefficients of Ineq.~(\ref{generalized}) have the values
claimed in Eq.~(\ref{eq:rec-g}) up to index $i$ and in
Eq.~(\ref{eq:rec-f}) up to index $i+1$.

We emphasize that we are still strengthening inequalities, so it is
the inequality for the $(i+1)$-version of Ineq.~(\ref{generalized})
that we derive that will imply Ineq.~(\ref{generalized}) for $i$.  The
induction is used to show that the coefficients in the inequalities
fulfill the recurrence equations stated for them.

From Ineq.~(\ref{generalized}), we collect the $a_{m-i}$ to obtain
$$
f_is_{m-(i+2)} + f_{i+1}s_{m-(i+1)} \ge (g_i-f_i)a_{m-i},
$$
and using Ineq.~(\ref{eq:aj-1_aj}), we get the stronger inequality 
$$
f_is_{m-(i+2)} + f_{i+1}s_{m-(i+1)} \ge (g_i-f_i)(c+\eps)a_{m-(i+1)} - (g_i-f_i)s_{m-(i+2)},
$$
which can be rewritten as
$$
g_is_{m-(i+2)} + f_{i+1}s_{m-(i+1)} \ge (g_i-f_i)(c+\eps)a_{m-(i+1)}.
$$
Reordering on the left-hand side and inserting the claimed
Eqs.~(\ref{eq:rec-g}) and~(\ref{eq:rec-f}), we arrive at
$$
f_{i+1}s_{m-(i+1)} + f_{i+2} s_{m-(i+2)} \ge g_{i+1} a_{m-(i+1)},
$$
which is the $(i+1)$-version of Ineq.~(\ref{generalized}),
proving the claim.

This concludes the strengthening of the inequalities and the sequence
of implications. Thus, to prove Ineq.~(\ref{eq:opt}), it is sufficient
to prove that there exist $m$ and $i$ such that
Ineq.~(\ref{generalized}) holds.  This, in turn, will follow if $g_j$
is negative. Indeed, if $g_j$ eventually becomes negative, we can
choose $j$ to be the smallest such index and Ineq.~(\ref{generalized})
then implies the desired Ineq.~(\ref{eq:opt}).  This inequality
contradicts the assumption of the algorithm being $c$-competitive, and
we will have established the theorem.

Plugging Eq.~(\ref{eq:rec-f}) into Eq.~(\ref{eq:rec-g}) gives us
\begin{equation}
\label{eq:rec}
g_{i+1} = (c+\eps)(g_i - g_{i-2} ),
\end{equation}
and this is the recurrence we use to show that $g_i$ becomes negative.

According to \cite[Theorem~1.2.1]{KL93},
every solution to a linear homogeneous difference equation
oscillates (around zero, and thus has negative values infinitely
often) if and only if
its characteristic equation has no positive roots.
For a direct proof of the subcase of the above theorem that we
need, see~\cite{LPS89}.

The characteristic equation of Eq.~\ref{eq:rec} is
$r^3 - (c+\eps)r^2 + (c+\eps)=0$,
which, for the interval $1\leq c+\eps<\frac{3\sqrt{3}}{2}$,
has two imaginary roots and one real root.
Letting $d=c+\eps$ and
\[s=-\sqrt[3]{108d - 8d^3 - 12\sqrt{-12d^4+81d^2}},\]
this third root is
$\frac{s}{6} + \frac{2d^2}{3s} + \frac{d}{3}$,
which is negative for $1\leq d\leq\frac{3\sqrt{3}}{2}$.
(Note that $s$ is no longer real when $-12d^4+81d^2$ becomes
negative, which occurs for $d> \frac{3\sqrt{3}}{2}$. At
$d=\frac{3\sqrt{3}}{2}$, in addition to the negative real
root, there is a double root at $\sqrt{3}$, and the solution
to the recurrence never becomes negative.)

Thus, from~\cite[Theorem~1.2.1]{KL93}, any solution to the
recurrence equation oscillates, implying, in particular,
that it becomes negative at some point, giving the desired contradiction.

Finally, we return to the assumption of strictness, which can easily
be removed.  There are only two places we use the relation given by
\ALG being strictly $c$-competitive. One place is in the claim
$c a_j \geq a_1$. However, we use this only to lead to
Ineq.~(\ref{eq:aj-1_aj}), and just as we used a large enough first bag
to make $\eps a_{j-1} \geq 1$, we can increase the size of the first
bag to eliminate any additive constant.  The other place was in the
argument that 
$c a_{j-1} \geq n_j - 1 + s_{j-2}$.
Also in this
case, if bags are large enough, no additive constant makes a
difference, and the minimum size of all bags can be increased by
increasing $n_1$, since that increases the lower bound on $a_1$, and
the algorithm can never hold fewer vertices in any bag than that.
\mbox{}\qed\end{proof}

The previous two theorems give us:
\begin{corollary}
For Independent Set in the \LAthenRM, the competitive ratio is~$3\sqrt{3}/2$.
\end{corollary}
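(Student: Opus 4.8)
The plan is simply to combine the two preceding theorems, using the definition that the competitive ratio of a problem is the infimum, over all online algorithms for that problem, of the competitive ratio of the algorithm. So I would argue in two directions.

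For the upper bound on the competitive ratio of the problem, I would invoke Theorem~\ref{thm:larIS-alg}, which exhibits a concrete online algorithm, \algis, that is strictly $3\sqrt{3}/2$-competitive and hence in particular $3\sqrt{3}/2$-competitive. Since the competitive ratio of the problem is an infimum over all online algorithms, it is at most the competitive ratio of this particular one, so it is at most $3\sqrt{3}/2$. For the lower bound, I would invoke Theorem~\ref{thm:larIS-lb}, which says that \emph{every} online algorithm for Independent Set in the \LAthenRM has competitive ratio at least $3\sqrt{3}/2$; taking the infimum over all such algorithms preserves this lower bound, so the competitive ratio of the problem is at least $3\sqrt{3}/2$.

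Putting the two bounds together yields that the competitive ratio is exactly $3\sqrt{3}/2$, which is the claim. I do not expect any real obstacle: all the substantive work — the design and analysis of \algis via Lemmas~\ref{lemma:IS}--\ref{lemma:brplus}, and the adversarial bag construction together with the recurrence analysis — has already been carried out in Theorems~\ref{thm:larIS-alg} and~\ref{thm:larIS-lb}. The corollary is just the routine observation that a matching upper and lower bound pin down the value.
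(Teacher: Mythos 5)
Your proposal is correct and matches the paper exactly: the paper derives this corollary immediately from Theorems~\ref{thm:larIS-alg} and~\ref{thm:larIS-lb} with no further argument. Combining the strictly $3\sqrt{3}/2$-competitive algorithm (upper bound) with the universal lower bound of $3\sqrt{3}/2$ is precisely the intended one-line justification.
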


\section{Matching} 
A \emph{matching} in a graph $G=(V,E)$ is a subset of $E$ consisting
of pairwise non-incident edges.
For the problem called Matching, the objective is to find a matching
of maximum cardinality. 
We study online Matching in the edge arrival model, but note that the
results hold in the vertex arrival model as well:
For the upper bounds, an algorithm in the vertex arrival model
can process the edges incident to the current vertex in any order.
For the lower bounds, all adversarial sequences used in this section
consist of paths, and hence, exactly the same input can be given in the vertex
arrival model.

It is well known and easy to prove that the greedy algorithm which adds an edge to the
matching whenever possible is $2$-competitive and this is optimal in
the \SM. The first published proof of this 
is perhaps in the classical paper of Korte and Hausmann~\cite{KH78}. The paper shows that in any graph,
the ratio of the minimum
size of a maximal matching to the size of a maximum matching
is at least~$\frac{1}{2}$, and there are graphs where it is no more
than~$\frac{1}{2}$.
Since the greedy algorithm produces a maximal matching, the claim follows.

Late accept or late reject alone does not help:

\begin{theorem}
For Matching in the \LAM, the competitive ratio is $2$.
\end{theorem}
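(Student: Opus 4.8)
\emph{Plan.} I would prove the two bounds separately. The upper bound is immediate: the \LAM only gives the algorithm extra power over the \SM, so the greedy algorithm that accepts each edge on arrival whenever doing so keeps the accepted set a matching is a legal \LA algorithm; it produces a maximal matching, and by the bound of Korte and Hausmann recalled above a maximal matching has at least half the size of a maximum one, so greedy is strictly $2$-competitive. Everything then reduces to proving that, for every $c<2$, no online algorithm in the \LAM is $c$-competitive.

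So fix $c<2$ and suppose \ALG is $c$-competitive with additive constant $\alpha$. The first thing to pin down is that the algorithm cannot simply defer all of its decisions. Since the adversary may stop after any request and the algorithm is granted no special final move, after processing request $k$ the set of edges \ALG has \emph{already} accepted is a matching and must have size at least $(\mu_k-\alpha)/c$, where $\mu_k$ is a maximum matching size in the graph seen so far; moreover this accepted set is monotone non-decreasing, so every acceptance is an irrevocable commitment.

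Next I would run a ``pool-and-pounce'' adversary. It presents pairwise vertex-disjoint \emph{pool} edges on fresh vertices, one at a time, following one rule: the instant \ALG's accepted matching contains a pool edge $f=ab$, it appends two pendant edges $a'a$, $bb'$ on new vertices, turning $f$ into the middle edge of a path on four vertices (a \emph{gadget}). Because $f$ is permanently accepted, \ALG can use neither pendant, so a gadget is worth $1$ to \ALG but $2$ to an optimal solution; and the graph is always a vertex-disjoint union of paths, so the construction also applies in the vertex arrival model. Say $N$ pool edges have been presented, \ALG has accepted $g$ of them and these have all been converted into gadgets; then $p=N-g$ pool edges remain as single edges, $\OPT=2g+p$, and at that moment \ALG's accepted matching is exactly the $g$ gadget middles. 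Competitiveness gives $g\ge(2g+p-\alpha)/c$, i.e.\ $(2-c)g+p\le\alpha$; since $g+p=N$, $g\le N$ and $c<2$, this forces $N\le\alpha/(2-c)$. Hence for $N=\lceil\alpha/(2-c)\rceil+1$ we get $\OPT-c\cdot\ALG=(2-c)g+p>\alpha$, the desired contradiction --- and this covers every behaviour: if \ALG keeps accepting pool edges it ends with $g=N$ gadgets and $\OPT=2N=2\,\ALG$, while if it stops accepting, its matching stays too small once enough single edges are out, so it is forced to accept after all.

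The step I expect to be the main obstacle is the second one: making rigorous that the algorithm genuinely cannot keep the single edges ``in reserve'' to be late-accepted only at the end. This is exactly where the \LAM differs from the situation for Independent Set, and it needs the observation that the adversary's freedom to stop forces the already-accepted set alone to be $c$-competitive at every prefix, together with the pounce rule, which ensures that any single edge the algorithm does accept has already been spoiled into a gadget. Once that is in place the arithmetic is short, and the threshold $2$ appears precisely because a gadget's maximum matching is twice the contribution of its committed middle edge.
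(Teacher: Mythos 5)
Your proposal is correct and follows essentially the same route as the paper: the upper bound is inherited from the \SM via greedy, and the lower bound uses exactly the paper's adversary, which presents disjoint edges and, the moment \ALG accepts one, spoils it into the middle edge of a $P_4$ via two pendant edges, so that each component contributes at least twice as much to \OPT as to \ALG. Your version merely spells out more explicitly the prefix-competitiveness argument and the $(2-c)g+p\le\alpha$ counting that the paper leaves implicit.
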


\begin{proof}
The upper bound follows from the \SM. For the lower bound, 
we can use the same sequence as in the \SM:
The adversary presents $m$ mutually non-incident
edges to some algorithm, \ALG. For every
edge $uv$ accepted by \ALG at any point, the adversary presents
edges $xu$ and $vy$, which \ALG cannot accept.
Thus, there will be $m$ connected components such that
in each component, \OPT accepts at least twice as many edges as \ALG.
\mbox{}\qed\end{proof}

\begin{theorem}
For Matching in the \LRM, the competitive ratio is $2$.
\end{theorem}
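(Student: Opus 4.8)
Since an algorithm in the \LRM\ may simply decline ever to late-reject, the greedy algorithm of the \SM\ is still strictly $2$-competitive, so only the lower bound requires work. The new difficulty is that \ALG\ may preempt; in particular the standard lower bound fails, since on a disjoint union of three-edge paths $a-b-c-d$ with the middle edge $bc$ presented first, \ALG\ can preempt $bc$ in favour of $ab$ and then accept $cd$, obtaining a maximum matching. The plan is therefore an \emph{adaptive} adversary that still presents only a disjoint union of paths and that neutralises preemption by reusing a late-rejected edge as one end of the next trap.

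The adversary works on each connected component independently and maintains the invariant that \ALG\ holds \emph{at most one} edge of each component, which is automatic because every edge the adversary gives in a component after the first is a pendant conflicting with the edge \ALG\ currently holds there. In a given component, while \ALG\ holds nothing the adversary keeps giving pendant edges; if \ALG\ holds nothing for arbitrarily long the component grows into a long path that \ALG\ leaves entirely unmatched, so the ratio is unbounded. Once \ALG\ holds an edge $h$, one endpoint of which, the \emph{frontier}, is not yet incident to any other edge of the component, the adversary gives a pendant $p$ at the frontier (so $p$ conflicts only with $h$). If \ALG\ preempts $h$ and takes $p$, then $h$ is discarded for good, \ALG\ now holds only $p$, its far endpoint is the new frontier, and the adversary recurses. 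If \ALG\ keeps $h$, then $p$ is discarded for good; the adversary then attacks, in the same way, any endpoint of \ALG's held edge that is still fresh, and stops the component once the held edge is flanked on both sides by discarded edges. Each component is stopped after a bounded number of steps.

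A stopped component is a path with at least three edges in which \ALG\ holds at most one edge, while \opt\ holds at least two: the two edges flanking \ALG's held edge use distinct endpoints of it and hence form a matching, and although these edges were discarded by \ALG\ they are still present for \opt. So $\opt \ge 2\,\ALG$ on every stopped component, and on any component where \ALG\ never commits the ratio is already unbounded; summing over many components yields $\opt(\sigma)\ge 2\,\ALG(\sigma)$ with $\opt(\sigma)$ arbitrarily large, which contradicts $(2-\eps)$-competitiveness for every $\eps>0$, even with an additive constant. The step I expect to be the real content is exactly what the adaptivity is for: proving that \ALG\ cannot use preemption to escape a three-edge trap. The reason it works is that a preemption destroys an edge permanently and that edge then becomes one of the two flanking edges of the next trap, so a preemption never yields a net gain — it merely lengthens a path from which \opt\ profits at least as fast. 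The remaining boundary bookkeeping (\ALG\ refusing or later discarding the first edge of a component, and driving the first trap up to three edges) is routine.
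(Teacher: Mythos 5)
Your proof is correct and follows essentially the same strategy as the paper's: an adaptive adversary building disjoint short paths by attaching pendant edges that conflict with \ALG's currently held edge, so that the irrevocability of (late) rejection leaves \ALG with at most one edge per component while \OPT takes the two flanking edges. The paper simply caps each component at three edges via an explicit two-case analysis (preempt or not), whereas you let the path grow under repeated preemption --- which only helps the adversary --- so your claim that each component stops after a bounded number of steps is not actually forced, but it is also not needed and harmless.
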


\begin{proof}
The upper bound follows from the \SM. For the lower
bound, the adversary presents $m$ mutually
non-incident edges to some algorithm, \ALG. 
For each edge, $uv$, accepted by \ALG, the adversary presents an edge
$vx$.
If \ALG late-rejects $uv$, then the adversary presents an edge
$xy$. 
\ALG can only accept one of the edges $vx$ and $xy$ and it
cannot accept $uv$ again, but \OPT accepts both $uv$ and $xy$.
Otherwise, \ALG must reject $vx$.
In this case, the adversary presents an edge, $zu$.
\ALG can only keep $zu$ or $uv$, but \OPT accepts both $zu$ and $vx$.
This adversarial strategy results in $m$ connected components such that
in each component, \OPT accepts at least twice as many edges as \ALG.
\mbox{}\qed\end{proof}

\begin{theorem}
For Matching in the \LAthenRM, the  competitive ratio is at least $3/2$.
\end{theorem}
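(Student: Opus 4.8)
The plan is to show that for every deterministic algorithm $\ALG$ and every $c<\frac32$, there is an input on which $\ALG$ is not $c$-competitive. The input will be a disjoint union of $m$ \emph{gadget paths}, generated adaptively, with $m\to\infty$ so that any additive constant is swamped; since all components are paths, the same sequence can be given in the vertex arrival model. So the real content is a single gadget that forces, per component, $\opt \ge \frac32\,\ALG$ (in fact $\ge 2\cdot\ALG$ on one branch), together with an amortized argument that $\ALG$ cannot wriggle out of this on more than a lower-order fraction of the gadgets.

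For one gadget the idea is to build a \emph{dilemma} around a single edge. The adversary first presents an edge $bc$; because the total optimum is $\Theta(m)$ while $\ALG$ must keep within a factor $c$, $\ALG$ is pushed to hold these seed edges, so fix attention on a gadget where $\ALG$ currently holds $bc$. The adversary now reveals $ab$ and $cd$, producing the path $a\!-\!b\!-\!c\!-\!d$, whose unique maximum matching is $\SET{ab,cd}$ of size $2$; $\ALG$ is holding the ``wrong'' edge $bc$, so to reach a matching of size $2$ on this component it must \emph{late-reject} $bc$, which is irrevocable. If $\ALG$ does reject $bc$, the adversary extends the gadget to $z\!-\!a\!-\!b\!-\!c\!-\!d\!-\!e$, whose unique maximum matching $\SET{za,bc,de}$ has size $3$ but is now unreachable for $\ALG$: any matching it can still form on this component avoids $bc$ and hence has size at most $2$, so the gadget contributes $\opt=3$ versus $\ALG\le 2$. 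If instead $\ALG$ refuses to reject $bc$, the adversary simply leaves the component as $a\!-\!b\!-\!c\!-\!d$, where $\ALG$ is already at ratio $2$. So, whatever $\ALG$ does on the gadget, the adversary can force $\opt\ge\frac32\,\ALG$ on it, and this is the mechanism to be exploited $m$ times.

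The main obstacle is the last step: because late accepts let $\ALG$ re-optimize freely, a naive $\ALG$ could try to escape — e.g.\ never commit to the seed edge, or defer all clean-up to the very end. This is handled by sequencing the construction in phases (all seeds; then all length-$3$ paths; then, for the gadgets where $\ALG$ rejected the seed, all extensions) and by observing that \emph{at the end of each phase the adversary may stop}, so at that moment $\ALG$'s \emph{held} matching — not merely a matching it could late-accept in the future — must be within a factor $c$ of the current optimum. On a length-$3$ gadget a held matching of size $2$ is possible only after the seed edge has been irrevocably discarded, so the competitiveness requirement after the second phase forces $\ALG$ to have late-rejected the seed on a $1-o(1)$ fraction of the gadgets; the extension phase then pins $\ALG$ at $2$ on each of those. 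A short counting argument over the three phases (bounding $\ALG$'s held matching after each phase from above while $\opt$ is exactly $m,2m,3m$) then yields $\opt\ge(3-o(1))m$ and $\ALG\le(2+o(1))m$, contradicting $c$-competitiveness for $c<\frac32$ once $m$ is large; carrying the $o(1)$ terms explicitly is the routine but essential calculation.
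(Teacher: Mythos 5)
Your single-gadget dilemma is exactly the paper's construction (a middle edge, then the two flanking edges, then the two outer extension edges, yielding per-gadget outcomes $1$ vs.\ $2$ or $\le 2$ vs.\ $3$), but the global orchestration by synchronized phases has a genuine gap, and it is precisely at the step you flag as the main obstacle. The claim that ``on a length-$3$ gadget a held matching of size $2$ is possible only after the seed edge has been irrevocably discarded'' is false: it is also possible when the seed was never accepted at all, and phase~1 only forces \ALG to accept a $1/c\ge 2/3$ fraction of the seeds, not all of them. Worse, the counting after phase~2 does not force rejection on a $1-o(1)$ fraction of gadgets: holding at least $2m/c$ edges after phase~2 only forces \ALG to hold the seed on at most $2m/3$ gadgets, while phase~1 forced it to hold at least $2m/3$ of them, so for $c=3/2-\delta$ only a $\Theta(\delta)$ fraction of rejections is forced. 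Concretely, the following \ALG beats your adversary for every $c\in[4/3,3/2)$: accept $\lceil m/c\rceil$ seeds in phase~1; in phase~2, on every gadget late-reject the held seed (if any) as its first flanking edge arrives and accept both flanking edges, ending phase~2 with $2m$ held edges against $\OPT=2m$; phase~3 then extends only the $\approx 2m/3$ rejected gadgets, giving a final ratio of $(2m+\frac{2m}{3})/(2m)=4/3+o(1)$, and one checks that every prefix stays within factor $c$. The per-gadget ratios $1$ (never-accepted gadgets) and $3/2$ (rejected gadgets) average out strictly below $3/2$.

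The repair is to abandon the fixed phase schedule and make the adversary adaptive \emph{per gadget}, which is what the paper does: the two flanking edges of a gadget are presented only \emph{after} \ALG (late-)accepts that gadget's seed, and the two extension edges only after \ALG late-rejects it. Then a gadget whose seed is never accepted remains a single edge on which \ALG holds $0$ and \OPT holds $1$ --- unboundedly bad rather than ratio $1$ --- so every gadget ends in one of three terminal states with per-component ratio $\infty$, $2$, or $3/2$, and summing over the $m$ components (with $\OPT\ge m\to\infty$ absorbing the additive constant) gives the bound. Your phased variant cannot be salvaged without reintroducing exactly this per-gadget adaptivity.
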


\begin{proof}
The adversary presents a number of mutually non-incident edges
to some algorithm, \ALG.
If, for some such edge $uv$, during the entire processing of the input,
\ALG does not accept $uv$, then \OPT will, and no more edges
incident to $u$ or $v$ will be presented.
The ratio is then unbounded on the subconstruction containing $uv$.

If \ALG accepts an edge $uv$, the adversary presents $xu$ and $vy$.
If \ALG never rejects $uv$, no more edges incident to any of these vertices
will be presented, and the ratio is $2$ on this subconstruction.

If \ALG late-rejects $uv$ at some point, then the adversary presents
$x'x$ and $yy'$. The algorithm cannot accept $uv$ again,
so it cannot accept more than two edges from this subconstruction,
while \OPT can accept three,
giving a ratio of~$3/2$.
\mbox{}\qed\end{proof}

To prove a matching upper bound, we give an algorithm,
Algorithm~\ref{alg:matching}, which is strictly $\frac{3}{2}$-competitive
in the \LAthenRM. 

Recall that for a matching $M$, a path $P=e_1,\ldots, e_k$ is
{\em alternating} with respect to $M$, if for all $i\in \{1,\ldots, k\}$,
$e_i$ belongs to $M$ if and only if $i$ is even. 
Moreover, an alternating path $P$ is called {\em augmenting} if neither endpoint
of $P$ is incident to a matched edge.
Note that the symmetric difference of a matching $M$ and
an augmenting path with respect to $M$
is a matching of size larger than $M$.
We focus on local changes, called short augmentations in~\cite{VH05}.
We use a result which implies
that if a maximal matching $M$ does not admit augmenting paths of
length~$3$, then $3|M| \ge 2|OPT|$. This fact is part of the folklore 
and its proof can be found for example in~\cite[Lemma~2]{FKMSZ05}.

\begin{algorithm}[H]
\caption{Algorithm for maximal matching in the \LAthenRM.}
\label{alg:matching}
\DontPrintSemicolon
\KwResult{Matching $M$}
	$M=\emptyset$\;
	\While{an edge $e$ is presented}{
		\If{$M\cup\SET{e}$ is a matching}{
			$M = M \cup\SET{e}$\;
		}{
  			\If{there is an 
                           augmenting path $xuvy$ of length~$3$ }{
				$M = (M\cup\SET{ux,vy}) - \SET{uv}$\;
			}
		}
	}
\end{algorithm}

\begin{theorem}
\label{thm:larM}
For Matching in the \LAthenRM, Algorithm~\ref{alg:matching}
is strictly $3/2$-competitive.
\end{theorem}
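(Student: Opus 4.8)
The plan is to prove that when Algorithm~\ref{alg:matching} terminates, the matching $M$ it has built is \emph{maximal} and admits \emph{no augmenting path of length~$3$}; the folklore bound recalled above (\cite[Lemma~2]{FKMSZ05}) then gives $|OPT|\le\frac{3}{2}|M|$, which is exactly strict $\frac{3}{2}$-competitiveness. Along the way I also verify that the algorithm is legal in the \LAthenRM, i.e.\ that it never late-accepts an edge it has previously late-rejected.

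First I would record two structural facts. (a) \emph{Once a vertex is matched it stays matched}: neither greedily adding the newly presented edge nor replacing $uv$ by $ux$ and $vy$ along an augmenting path $xuvy$ ever removes a vertex from the current matching. (b) \emph{At every moment, every already-revealed neighbour of an unmatched vertex is matched}: if $wz$ is a revealed edge and $w$ is currently unmatched, then $w$ was unmatched when $wz$ arrived (by (a)), so if $z$ had been unmatched then, $wz$ would have been added greedily, making $w$ matched --- a contradiction; hence $z$ was matched when $wz$ arrived and, by (a), still is. Applying (b) at termination to every revealed edge shows that $M$ is maximal. Fact~(a) also settles legality: an edge is late-rejected only as the middle edge $uv$ of an augmentation, after which $u$ and $v$ are matched to each other, so by (a) that edge is never re-added; and an edge late-accepted as an outer edge ($ux$ or $vy$) of an augmentation has an endpoint that was just unmatched, so by (a) it was never matched before, hence never late-rejected.

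The heart of the proof is the invariant: \emph{after Algorithm~\ref{alg:matching} finishes processing each presented edge, $M$ admits no augmenting path of length~$3$}. I would prove this by induction on the number of edges presented. If the new edge $e=pq$ is added greedily, a new length-$3$ augmenting path $abcd$ with middle edge $bc$ cannot have $bc=e$ (then, as in (b), $ab$ would already have been added greedily, so $b$ would already be matched), and cannot have $ab=e$ or $cd=e$ (since $e\in M$ whereas outer edges of an augmenting path are unmatched edges), so $abcd$ was already present before $e$ arrived, contradicting the inductive hypothesis. If $e=pq$ cannot be added, then by (a) and (b) exactly one endpoint, say $p$ (matched to $p'$), is matched while $q$ is unmatched with all its other revealed neighbours matched; hence every length-$3$ augmenting path available at this moment uses $e$ as an outer edge and has the form $q\,p\,p'\,r$ with $p'r$ revealed and $r$ unmatched, so in particular its middle edge is $pp'$, and the algorithm augments along one such path. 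It then remains to see that no length-$3$ augmenting path survives: the middle edge of any length-$3$ augmenting path with respect to the new matching is either one of the two new matched edges $pq$, $p'r$ --- impossible, because an endpoint of such a path would be a revealed neighbour of $q$ or of $r$, and all revealed neighbours of $q$ and of $r$ were matched just before the step (as $q$ and $r$ were then unmatched, by (b)) and remain matched --- or an old matched edge different from the removed edge $pp'$, in which case that whole path, with its unmatched endpoints, was already present before the step, contradicting the inductive hypothesis. This completes the induction.

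Together with maximality, \cite[Lemma~2]{FKMSZ05} then yields $|OPT|\le\frac{3}{2}|M|$, as required. The step I expect to be the main obstacle is exactly this inductive step: one has to rule out \emph{all} ways in which a single augmentation --- performed along whichever length-$3$ augmenting path the algorithm happens to select --- could create a fresh length-$3$ augmenting path, and this is where facts (a) and (b), and the observation that every currently available length-$3$ augmenting path shares the middle edge $pp'$, do the real work. The greedy-addition case, maximality, and \LAthenRM-legality are comparatively routine once (a) and (b) are available.
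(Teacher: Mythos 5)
Your proof is correct and follows the same overall strategy as the paper's: show that Algorithm~\ref{alg:matching} is a legal \LAthenR algorithm, show that the final matching $M$ is maximal and admits no augmenting path of length~$3$, and then invoke the folklore bound $|OPT|\le\frac{3}{2}|M|$. Your legality argument (a late-rejected edge is always the middle edge of an augmentation, after which both its endpoints are and remain matched, so the edge can never again be added greedily or serve as an outer edge of a length-$3$ augmenting path) is essentially identical to the paper's. Where you differ is in the division of labour: the paper writes out the symmetric-difference argument behind the folklore lemma in full but merely \emph{asserts} that the algorithm ``ensures that $G$ does not contain any augmenting paths of length at most three \ldots by augmenting on them when they do exist,'' whereas you cite the lemma and instead prove by induction the invariant that no length-$3$ augmenting path survives after each edge is processed. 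This is a worthwhile addition: since the pseudocode performs at most \emph{one} augmentation per arriving edge, it is not immediate that a single augmentation restores the invariant, and your observation that every length-$3$ augmenting path available at that moment is of the form $q\,p\,p'\,r$ --- in particular, that they all share the middle edge $pp'$ --- is exactly the point that makes the inductive step go through.

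One slip to repair: in the case where $e=pq$ cannot be added greedily, you claim that ``by (a) and (b) exactly one endpoint is matched,'' but facts (a) and (b) do not imply this --- both $p$ and $q$ may already be matched (for instance, when $e$ closes a triangle or a four-cycle whose other edges were matched earlier). That case is trivial and should simply be stated: since $e\notin M$ it cannot be the middle edge of an augmenting path, and since both of its endpoints are matched it cannot be an outer edge either, so the arrival of $e$ creates no new length-$3$ augmenting path and the invariant is preserved with the algorithm taking no action. With that case inserted, the case analysis is exhaustive and the induction is complete.
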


\begin{proof}
We first show that Algorithm~\ref{alg:matching} is a \LAthenR
algorithm, i.e., no edge which is late-rejected is later late-accepted. 
Suppose an edge $e=uv$ is late-rejected in some step $i$. For it to be
accepted again, there must later be an augmenting path consisting of three
edges, where $e$ is one of the outer edges, and one endpoint of $e$ must
not be incident to any edges of the matching at that point. However, once
a vertex is incident to an edge in some matching, it is always incident
to some edge in all later matchings, due to the augmentation. Thus, $e$ cannot
be late-accepted later, so after any edge is late-rejected, it will never
be accepted again.

Next, we prove that the algorithm is strictly $3/2$-competitive.
Let $M$ be the matching constructed by Algorithm~\ref{alg:matching} on
a graph~$G$. Algorithm~\ref{alg:matching} ensures
that $G$ does not contain any augmenting paths of length at
most three with respect to $M$ by augmenting on them when they do exist.  
To prove that this fact implies the bound, we present a proof similar
to~\cite[Lemma~2]{FKMSZ05}. Let $M'$ be any maximum matching in~$G$.
Consider the symmetric difference $N$ of $M$ and $M'$.
Since $M$ and $M'$ are matchings,
any path in $N$ contains alternatingly edges from $M$ and $M'$.
Since each augmenting path with respect to $M$ in $N$ contains one more
edge of $M'$ than of $M$, we get that $N$ contains $|M'| - |M|$
augmenting paths.  Clearly, no augmenting path
in $N$ consists of a single edge, since 
all edges which either are not accepted or are late-rejected by
Algorithm~\ref{alg:matching} are incident to at least one edge accepted
by the algorithm.
Since there is no augmenting path of length at most three with respect
to $M$ in $G$, it follows that all of the $|M'| - |M|$
augmenting paths in $N$ have length at least five.  At least two
edges of an augmenting path in $N$ of length at least five are contained in
$M$, and thus accepted by the algorithm, so
we get that $|M| \geq 2(|M'| - |M|)$, giving that $|M'| \leq \frac{3}{2}|M|$.
\mbox{}\qed\end{proof}

\section{Vertex Cover}
A \emph{vertex cover} for a graph $G=(V,E)$
is a subset $C\subseteq V$ such that for any edge,
$uv\in E$, $\SET{ u,v} \cap C \not= \emptyset$.
For the problem called Vertex Cover, the objective is to find a vertex
cover of minimum cardinality.
We study online Vertex Cover in the vertex arrival model.

\begin{theorem}
\label{stdvc}
For Vertex Cover in the \SM, the strict competitive ratio is~$n-1$.
\end{theorem}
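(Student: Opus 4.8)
The plan is to mirror the structure of the Independent Set analysis in the standard model (Theorem~1), since Vertex Cover is, loosely, its complement: in the vertex arrival model an adversary can force an online algorithm to put essentially every vertex into the cover while the optimal cover has size~$1$.

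First I would establish the upper bound. Consider the algorithm that, whenever a new vertex $v$ arrives, accepts $v$ into the cover if any edge incident to $v$ and a previously-arrived vertex is currently uncovered. More simply, accept every vertex with at least one neighbor among the earlier vertices. This always produces a feasible vertex cover, and it accepts at most $n-1$ vertices, because the first vertex is never accepted (it has no earlier neighbors) — more carefully, in each connected component the first-arriving vertex of that component is never accepted, so on a graph with at least one edge we accept at most $n-1$ vertices, giving a strict competitive ratio of at most $n-1$. (On an edgeless graph \OPT${}=0$ and the algorithm accepts nothing, so the ratio is fine.)

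Second, the lower bound, which is the main obstacle and the only part requiring a genuine adversary argument. Given an online algorithm \ALG, the adversary first presents an isolated vertex $v$. Then it presents vertices $u_2, u_3, \ldots$ one at a time, each joined by an edge to $v$ only, continuing as long as \ALG keeps accepting the newly arrived vertex $u_i$. If \ALG ever rejects some $u_i$, then $u_i$ is uncovered unless $v$ is in the cover; but the only way to cover the edge $vu_i$ using already-present vertices without $u_i$ is to take $v$ — so \ALG is forced to have taken $v$, or the solution is infeasible. The subtlety is that \ALG may delay: it might not accept $v$ immediately, intending to accept it later, but in the standard model acceptances are irrevocable and feasibility must hold at the end, so at termination either \ALG has accepted $v$, or it has accepted every $u_i$. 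In the first case the adversary stops presenting pendant vertices the moment $v$ is taken; from then on it can keep presenting new pendant neighbors of $v$, all of which are already covered by $v$, so \ALG need not accept them — but then the component has $n-1$ vertices $u_i$ plus $v$, \ALG has accepted at least the $u_i$ it accepted before plus $v$...

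This is where care is needed, so let me state the clean version. The adversary presents $v$ and then pendant neighbors of $v$; the first time \ALG would be forced to accept $v$ (i.e., rejects a pendant $u_i$), call that moment, and note \ALG must then have $v$ in its cover. From that point the adversary presents $n - i$ further pendant neighbors of $v$, which \ALG may leave out. Either way, the worst case for \ALG among these strategies is: \ALG accepts all pendants and never $v$, covering cost $n-1$; or \ALG accepts $v$ as late as possible, but to remain feasible while rejecting pendant $u_i$ it must already hold $v$, and the pendants it accepted before switching, say $j$ of them, stay in the cover irrevocably, for a cost of $j+1$ plus whatever it does afterward — the adversary then keeps presenting pendants forever, so \ALG's cover is at least $j+1$ while \OPT${}=\{v\}$; to beat ratio $n-1$ on an $n$-vertex graph \ALG would need $j+1 < $ something, but the adversary controls $n$ and can always make $n-1$ the achieved ratio by presenting enough pendants after the first accepted pendant. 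The cleanest phrasing: the adversary presents $v$, then presents pendant neighbors of $v$ one by one; \ALG accepts each such pendant (else it is forced to hold $v$ and the adversary switches to presenting arbitrarily many more pendants, making \ALG's cost at least $2$ while it could have achieved cost $1$ on a graph where \OPT${}=1$, so on such graphs \ALG has unbounded ratio unless it accepted only $v$ from the start — but "only $v$" means it rejected the first pendant, contradiction). Hence \ALG accepts all $n-1$ pendants, \OPT takes just $v$, and the strict ratio is $n-1$. I expect the write-up challenge to be exactly this case analysis on \emph{when} \ALG commits to $v$; once phrased as "the first rejected pendant forces $v$ into the cover, after which the adversary floods with more pendants," the bound $n-1$ drops out, matching the upper bound.
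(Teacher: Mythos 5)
Your upper bound is correct and is essentially the paper's: the algorithm that accepts a vertex only if it has an uncovered incident edge never accepts the first vertex, so it accepts at most $n-1$ vertices while $\OPT\ge 1$ whenever there is an edge.

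The lower bound, however, has a genuine gap. Your adversary commits to the star center $v$ \emph{before} learning whether \ALG accepts it. In the \SM the accept/reject decision for $v$ is made at the moment the isolated vertex $v$ arrives, and nothing stops \ALG from accepting it. Against the algorithm ``accept the first vertex; thereafter accept a vertex only if it has an uncovered incident edge,'' your instance (a star centered at the first vertex) gives \ALG the cover $\SET{v}$ of size $1 = \OPT$: every pendant edge $vu_i$ is already covered when $u_i$ arrives, so rejecting every $u_i$ is feasible and there is no contradiction in ``\ALG holds only $v$.'' The step where you conclude ``\,`only $v$' means it rejected the first pendant, contradiction'' is a non sequitur, and the claim that holding $v$ forces cost at least $2$ is unsupported. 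So your argument only establishes the bound for algorithms that reject the first vertex. The paper's adversary repairs exactly this: it presents \emph{independent} vertices until \ALG rejects one (some rejection must occur, since an algorithm accepting $n$ independent vertices has unbounded ratio against $\OPT=0$ on an edgeless graph), and only then makes all remaining vertices pendant to the \emph{rejected} vertex $v$. The $k-1$ vertices accepted before that rejection are irrevocably in the cover, the $n-k$ later pendants are forced in because $v$ is irrevocably out, and $\OPT=\SET{v}$, giving ratio $(k-1)+(n-k)=n-1$. Your proof needs this adaptive choice of the hub; as written it does not yield the stated bound.
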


\begin{proof}
For the lower bound, consider any online algorithm, \ALG.
For each $n$, the adversary presents independent vertices until \ALG
rejects some 
vertex or $n$ vertices have been presented. If $n$ vertices are presented,
\OPT accepts none of them and the ratio is unbounded. 
If the algorithm rejects some vertex $v$,
then the remainder of the $n$ vertices will be adjacent  
only to~$v$. \OPT will only accept $v$ and the result follows.

For the upper bound, the algorithm only accepts a new vertex $v$ if at
least one edge incident with $v$  
is not already covered. Thus, it rejects the first
vertex and therefore accepts at most $n-1$ vertices. \OPT accepts at least
one vertex unless there are no edges, in which case the algorithm
does not accept any vertices either.
\mbox{}\qed\end{proof}

The situation improves dramatically if we can accept vertices at a later stage.

\begin{theorem}
\label{avc}
For Vertex Cover in the \LAM, the competitive ratio is~$2$.
\end{theorem}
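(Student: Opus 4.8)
The plan is to establish the two directions separately. For the lower bound of $2$, I would reuse a path-based adversarial construction, exploiting the fact that in the \LAM rejections are still irrevocable, so once the algorithm declines a vertex it can never cover edges incident to it except through its neighbors. The adversary presents a long path $v_1, v_2, \ldots, v_n$ one vertex at a time. After each vertex $v_i$ arrives (with its single edge $v_{i-1}v_i$), the algorithm must, at that moment, have covered the edge $v_{i-1}v_i$, so it must hold $v_{i-1}$ or $v_i$. The key observation is that late accepts do not help the algorithm ``undo'' a past rejection of a low-degree vertex: I would argue that on a path, any online algorithm (even with late accept) is forced into covering roughly every edge as it appears, and a careful adversary can make it hold close to $n$ vertices while \OPT takes only the $\lceil n/2\rceil$ or so vertices of one side of the bipartition. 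The cleanest version is probably: present edges so that whenever \ALG rejects a vertex, the adversary stops extending in a way that forces \ALG's cover to be about twice \OPT's. I expect this to give a ratio approaching $2$ as $n\to\infty$.

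For the upper bound, the natural algorithm is: maintain the set of currently uncovered edges; whenever a new vertex $v$ arrives with incident edges to previously-seen vertices, if any of those edges is still uncovered, perform a local fix. The cleanest $2$-competitive strategy in the \LAM is to run a maximal-matching-style argument: whenever an edge $uv$ becomes ``exposed'' (both endpoints currently not in the cover), late-accept \emph{both} $u$ and $v$. Since late accept lets us go back and grab $u$ even though it arrived earlier and was not taken then, this is always feasible. The resulting cover $C$ is exactly $V(M)$ for a maximal matching $M$ built greedily on the edges in arrival order, so $|C| = 2|M|$. Because $M$ is a matching, $\OPT \ge |M|$ (any vertex cover must hit every edge of $M$, and the edges of $M$ are disjoint), hence $|C| = 2|M| \le 2\,\OPT$, giving strict $2$-competitiveness. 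I would phrase the algorithm carefully so that it indeed only ever \emph{adds} vertices (never removes), confirming it is a legal \LAM algorithm.

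The main obstacle is the lower bound: I must show that the extra power of late accept genuinely does not beat $2$, i.e., that no cleverness in deferring acceptances helps. The subtlety is that an algorithm could try to ``wait and see'' on a path before committing, but the feasibility requirement — every edge must be covered by the time all its incident structure is revealed, and in fact as soon as the edge itself is presented — pins it down. I would make the adversary adaptive: extend a path, and the first time \ALG is caught holding a vertex whose deletion would uncover an edge (which on a path is essentially always, since interior path vertices each cover two edges and \OPT needs only one endpoint per edge), the adversary attaches pendant vertices forcing \ALG to keep that vertex while \OPT could have chosen the other endpoint. Summing over the path yields $\ALG \ge 2\OPT - O(1)$. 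If a direct path argument is fiddly, an alternative is a ``caterpillar'' construction where each spine vertex the algorithm accepts gets a new pendant leaf, forcing the algorithm toward the larger side of the bipartition while \OPT takes the spine; I would pick whichever gives the cleaner matching lower bound of exactly~$2$.
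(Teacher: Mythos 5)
Your upper bound is correct and is exactly the paper's argument: never accept a vertex until an edge $uv$ arrives with both endpoints outside the current cover, then late-accept both $u$ and $v$; the resulting cover is the vertex set of a greedily built maximal matching $M$, and $|C|=2|M|\le 2\OPT$ since any cover must hit each of the $|M|$ disjoint edges. This part needs no changes.

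The lower bound is where there is a genuine gap. Your primary construction, a long path presented in order, does not work: the algorithm that accepts $v_i$ exactly when the edge $v_{i-1}v_i$ is uncovered on arrival accepts only $v_2,v_4,\dots$, i.e.\ $\lfloor n/2\rfloor$ vertices, which equals \OPT on a path --- so the path gives ratio $1$, not $2$, and no amount of ``careful adversary'' fixes this while staying on a path. Your fallback (adaptively attaching pendants) is the right instinct but is not pinned down, and as stated the pendant goes to the wrong vertex: attaching a leaf to a vertex the algorithm \emph{has} accepted costs the algorithm nothing. The paper's construction is cleaner and avoids the path entirely: present disjoint edges $\{u,v\}$, one pair at a time. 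Once both endpoints of a pair have arrived, \ALG must already hold at least one of them (the input could end there, and \ALG cannot revoke rejections in the \LAM). If it holds both, that component is closed with $\ALG=2$, $\OPT=1$. If it holds exactly one, say $u$, the adversary presents one new vertex adjacent only to $v$; \ALG must then take $v$ or the new vertex, giving $\ALG\ge 2$, while \OPT takes only $v$. Keeping the components disjoint makes the per-component ratio of $2$ sum to a global ratio of $2$ over arbitrarily many components, which is what kills the additive constant. I would replace your path/caterpillar discussion with this two-case, per-component argument.
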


\begin{proof}
The best known offline $2$-approximation algorithm for Vertex Cover
greedily maintains a maximal matching, repeatedly covering both endpoints of
an edge and removing all edges incident to these two endpoints.
In the \LAM, a $2$-competitive online algorithm can be obtained by mimicking
the offline approximation algorithm. 
The online algorithm does not accept any
vertex until it sees the second vertex incident to an uncovered edge;
then it accepts both endpoints of that edge.

For the lower bound, consider any algorithm \ALG. The adversary presents 
isolated pairs of vertices, each pair connected by an edge. After the 
second vertex of a pair has arrived, \ALG must have accepted at least
one of them, 
or the adversary could stop the input there, and \ALG's output would not
be a vertex cover.
If \ALG accepts both vertices, then no further vertices adjacent to the pair
arrive, and \OPT could have covered the edge with only one vertex. If \ALG 
accepts only one vertex $u$ from a pair $\SET{ u,v }$, then an additional
vertex adjacent only to $v$ arrives, and \OPT could cover both edges
with only $v$, but \ALG must accept at least two of the three vertices.
\mbox{}\qed\end{proof}

Allowing both late accept and late reject does not improve the
situation further.

\begin{theorem}
For Vertex Cover in the \LAthenRM, the competitive ratio is~$2$.
\end{theorem}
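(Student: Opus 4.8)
The upper bound is immediate: the algorithm in the proof of Theorem~\ref{avc} never performs a late reject, so it is a legal algorithm in the \LAthenRM, and it is $2$-competitive there. So all the work is in a matching lower bound.

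For the lower bound, the plan is an adaptive adversary that builds a large number of vertex-disjoint \emph{gadgets} and forces \ALG to use at least twice the optimum on each of them. Since the gadgets are independent, both \OPT and \ALG decompose along them, so this yields $\ALG(\sigma)\ge 2\,\OPT(\sigma)$, and taking many gadgets makes \OPT large and swamps any additive constant, giving competitive ratio at least $2$. On a single gadget the adversary first presents two fresh vertices joined by one edge; \ALG must put at least one endpoint into its cover (otherwise the input could stop and \ALG would be infeasible), and the adversary then \emph{relabels} so that $v_1$ denotes a vertex \ALG accepted and $v_2$ the other endpoint (if \ALG accepted both, either choice works). Next the adversary presents a pendant vertex $r$ adjacent only to $v_2$, turning the gadget into the path $v_1-v_2-r$, whose minimum cover has size $1$ and whose only size-$1$ cover is $\{v_2\}$. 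From here there are two possibilities. Either \ALG's cover on the gadget stays of size at least $2$ forever, and then the gadget already has ratio $2$ since \OPT covers it with $\{v_2\}$. Or at some point \ALG's gadget-cover becomes exactly $\{v_2\}$; since \ALG had accepted $v_1$, this is possible only by late-rejecting $v_1$, which is irrevocable. In that case the adversary reacts by presenting three more pendant vertices $a_1,a_2,a_3$, each adjacent only to $v_1$: because $v_1$ is now permanently out of \ALG's cover, \ALG must accept all of $a_1,a_2,a_3$, so its gadget-cover has size at least $4$, while \OPT covers the whole gadget with $\{v_1,v_2\}$, so the ratio on the gadget is again at least $2$.

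The step I expect to be the main obstacle is making this adaptivity airtight. \ALG may perform the decisive late reject of $v_1$ at any time, in particular long after the adversary has seemingly finished that gadget and moved on, so the adversary must keep every gadget under observation and respond with the three pendants the first time a gadget's cover shrinks to $\{v_2\}$. One also has to dispose of the other ways \ALG might try to escape: the relabeling ensures that whichever endpoint \ALG commits to is called $v_1$ and the pendant $r$ is always attached to the other endpoint $v_2$, so a gadget-cover of $\{v_2\}$ genuinely requires a late reject of $v_1$ and cannot arise ``for free''; if \ALG instead late-rejects $v_2$ (after late-accepting it) it is forced to hold both $v_1$ and $r$ and can never shrink the gadget-cover below $2$, since the unique size-$1$ cover uses the now-unavailable $v_2$, so that gadget is again ratio at least $2$; and attaching exactly three pendants to $v_1$ is what makes a punished gadget have ratio exactly $2$ rather than less, since one or two pendants would only force ratio $1$ or $3/2$. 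Because \ALG has finitely many vertices and late rejects cannot be undone, \ALG's behaviour on each gadget eventually stabilizes, the adversary can close off all gadgets, and then every gadget contributes ratio at least $2$; together with the upper bound this gives competitive ratio exactly $2$, and since all estimates hold up to an additive constant, the bound also holds for the strict competitive ratio.
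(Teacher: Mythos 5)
Your proof is correct, but your lower bound takes a genuinely different route from the paper's. The paper's argument is a two-line reduction: if \ALG ever late-rejects a vertex $v$, the adversary presents arbitrarily many vertices adjacent only to $v$, all of which \ALG must now accept while \OPT takes only $v$; hence no $c$-competitive algorithm for constant $c$ can ever late-reject, and the lower bound of $2$ from the \LAM (Theorem~\ref{avc}) applies verbatim. You instead build a single self-contained adaptive construction in which every late reject is punished by a \emph{bounded} amount (three pendants on the rejected $v_1$, calibrated so the punished gadget has ratio exactly $2$ against \OPT's cover $\{v_1,v_2\}$), and the unpunished branch of your gadget reproduces the \LAM lower bound gadget of Theorem~\ref{avc}. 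Both arguments are sound; I checked your case analysis (the only size-$1$ cover of $v_1$--$v_2$--$r$ is $\{v_2\}$, reaching it forces the irrevocable rejection of $v_1$, punished gadgets are permanently stuck at size $\ge 4$, and each gadget can trigger at most one punishment, so the adaptive process terminates). The paper's reduction is shorter and cleaner; your construction is more robust in that it does not need the ``\ALG never late-rejects, so an earlier lower bound applies'' step, quantifies exactly how much punishment suffices, and yields the per-gadget ratio $2$ directly, which makes the handling of the additive constant explicit.
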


\begin{proof}
The upper bound follows from Theorem~\ref{avc}.  The lower bound
follows from the observation that no algorithm that ever late-rejects
a vertex can be $c$-competitive for any constant~$c$.
Indeed, if \ALG late-rejects a vertex $v$, then the adversary
can present arbitrarily many vertices adjacent only to $v$.
Therefore, to be $c$-competitive for any constant $c$,
\ALG can never late-reject a vertex and the lower bound from
Theorem~\ref{avc} applies.
\mbox{}\qed\end{proof}

\begin{theorem}
\label{rej-vc}
For Vertex Cover in the \LRM, the competitive ratio is $n-\Theta(1)$.
\end{theorem}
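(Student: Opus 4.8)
The statement says the competitive ratio of Vertex Cover in the \LRM is $n - \Theta(1)$, which by the convention introduced earlier means two things: (i) for every online algorithm there is a constant $b > 0$ such that its strict competitive ratio is at least $n - b$, and (ii) for every constant $b$ there is a strictly $(n-b)$-competitive algorithm on graphs with at least $b+1$ vertices. The upper bound (ii) is easy and I would dispense with it quickly: the algorithm that accepts every vertex as it arrives produces a vertex cover of size at most $n$, and since \OPT must use at least one vertex whenever there is an edge (and zero when there are none), we get ratio at most $n$; to squeeze it to $n-b$ one observes that \OPT, on any graph with an edge, actually covers all edges incident to some vertex, so if the graph has at least $b+1$ vertices one can do slightly better, or alternatively just absorb $b$ into the additive slack — the key point is that late rejects only help, so the standard-model algorithm already gives essentially $n$, and matching the lower bound is what matters.

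**The lower bound.** This is the substance. I would take an arbitrary online algorithm \ALG in the \LRM and design an adversary that forces \ALG to keep (at the end) almost all $n$ vertices while \OPT keeps only a bounded number. The natural idea, echoing the standard-model lower bound in Theorem~\ref{stdvc} and the \LAthenR impossibility argument, is: present vertices so that \ALG is repeatedly punished for rejecting. Specifically, feed \ALG isolated vertices; since its final set must be a vertex cover, whenever it holds fewer than ``almost all'' vertices, the adversary has room to present a vertex adjacent only to some currently-rejected vertex $v$ — but in the \LRM, \ALG could just re-accept by un-rejecting? No: in the \LRM a reject is irrevocable, so once \ALG late-rejects $v$ it can never take $v$ back, and then the adversary piles many degree-one vertices onto $v$, all of which \ALG is forced to accept (it cannot cover those edges via $v$), blowing up \ALG's count relative to \OPT's (which just keeps $v$). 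So the adversary strategy is: keep presenting isolated vertices; the moment \ALG late-rejects any vertex $v$ (or never accepts one — handle that as the unbounded case), switch to presenting a long tail of pendant vertices attached only to $v$. Then \OPT keeps just $v$ (size $1$), and \ALG must keep all the pendants plus possibly more, giving ratio $\approx n$. If \ALG never late-rejects and never fails to cover, then it behaves like a standard-model algorithm and Theorem~\ref{stdvc} applies directly, giving $n-1$.

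**Combining the cases.** The argument splits on \ALG's behavior. Case 1: at some point \ALG's accepted set is not a vertex cover of the graph presented so far and the adversary stops — ratio unbounded (this is really the ``\ALG rejects a vertex needed for an edge'' branch). Case 2: \ALG late-rejects some vertex $v$; then the pendant-flooding above forces \ALG to end with at least $n - O(1)$ vertices while \OPT ends with $O(1)$; more carefully, before the flooding \ALG had accepted at most all the previously-presented vertices, which is some constant relative to the total once we add enough pendants, so the strict ratio is at least $n/\!\!\text{(const)}$ — but we want $n - b$, not $n/b$. I should be careful here: the cleaner route is that after the first late-reject, every subsequent pendant vertex must be accepted by \ALG, so \ALG's final cover contains all of those, i.e.\ at least $n - n_0$ where $n_0$ is the (fixed) number of vertices presented before the late-reject; since $n_0$ does not grow with the tail, $\ALG(\sigma) \ge n - n_0$ while $\OPT(\sigma) = 1$, giving strict ratio $\ge n - n_0$, which is exactly the $n - \Theta(1)$ form with $b = n_0$. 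Case 3: \ALG never late-rejects and always keeps a cover — then it is effectively a standard-model algorithm and Theorem~\ref{stdvc}'s adversary gives strict ratio $n-1$.

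**Main obstacle.** The only delicate point is bookkeeping the additive constant so that the bound comes out as $n - \Theta(1)$ rather than $n/\Theta(1)$: I must ensure that the number of vertices fixed ``before the bad event'' is a genuine constant independent of the eventual $n$, i.e.\ that the adversary commits to punishing \ALG based on a bounded prefix of its behavior, and then lets $n$ grow only through the pendant tail. I expect this to be routine but it is where the convention forces care. A secondary subtlety is verifying that in the \LRM \ALG truly cannot recover from rejecting $v$ — but that is immediate from the irrevocability of late rejects, so no real difficulty there.
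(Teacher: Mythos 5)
Your lower bound is essentially the paper's argument: feed independent vertices until \ALG rejects some vertex $v$ (it must, since \OPT is $0$ on an independent set and a rejection -- immediate or late -- is irrevocable in the \LRM), fix $n_0 = b$ as the number of vertices presented up to that point, and then attach arbitrarily many pendant vertices to $v$, all of which \ALG must accept while \OPT takes only $v$. Your bookkeeping that $b$ is fixed by the algorithm's behavior on the prefix and does not grow with the tail is exactly the point, and that part of your proposal is fine.

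The gap is in the upper bound, which you dismiss as easy but which is where the \LRM is actually used. The convention requires, for \emph{every} constant $b$, a \emph{strictly} $(n-b)$-competitive algorithm, i.e.\ $\ALG(\sigma) \le (n-b)\,\OPT(\sigma)$ with no additive slack -- so ``absorb $b$ into the additive constant'' is not available. Your candidate algorithms do not achieve this: accept-everything gives ratio $n$ on a single edge plus $n-2$ isolated vertices ($\OPT = 1$), and the standard-model greedy of Theorem~\ref{stdvc} gives only $n-1$, which fails already for $b = 2$. No standard-model algorithm can do better than $n-1$ (that theorem is tight), so you genuinely need late rejects here. The paper's algorithm $\ALG_b$ accepts the first $b+1$ vertices unconditionally, then computes an optimal vertex cover $C$ of the graph seen so far and late-rejects every accepted vertex not in $C$, and afterwards accepts a new vertex only when necessary. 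Since $|C| \le \OPT(I)$, this yields $\ALG_b(I) \le \OPT(I) + n - (b+1) \le (n-b)\,\OPT(I)$ whenever $\OPT(I) \ge 1$ (and both are $0$ otherwise). Without some such construction exploiting late rejection, the upper half of the $n - \Theta(1)$ claim is unproved.
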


\begin{proof}
For the lower bound, the adversary keeps giving independent vertices
until the algorithm rejects at least one vertex, $v$.
Since \OPT does not have to accept any vertices if they are all
independent, the algorithm must eventually reject at least one vertex
to avoid an unbounded competitive ratio.
We let $b$ denote the number of vertices presented at the time the
first vertex is rejected.
After this point, all new vertices are adjacent to $v$, so the
algorithm has to accept all of them.
In total, at least $n-b$ vertices are accepted, and \OPT accepts only~$v$.

For the upper bound, consider the following algorithm, $\ALG_b$:
The first $b+1$ vertices are accepted (if they arrive).
After that an optimal vertex cover, $C$, for the edges seen so far is
calculated. 
Each vertex not included in $C$ is rejected.
After this, each new vertex is accepted only if necessary.
Note that the size of $C$ is a lower bound on \OPT.
Thus, for any input sequence $I$ of length at least $b+1$, either
$\OPT(I) = \ALG(I) = 0$ or 
$$\frac{\ALG_b(I)}{\OPT(I)} = \frac{|C|+n-(b+1)}{\OPT(I)} 
  \leq \frac{\OPT(I)+n-(b+1)}{\OPT(I)} \leq n-b\,.$$
\mbox{}\qed\end{proof}

\section{Minimum Spanning Forest} 
A \emph{spanning forest} for a graph $G=(V,E)$
is a subset $T\subseteq E$ which forms a spanning
tree on each of the connected components of $G$.
Given a weight function $w \colon E \to \mathbb{R^+}$,
the objective of the Minimum Spanning Forest problem is to find
a spanning forest of minimum total weight.
We let $W$ denote the ratio between the largest and the smallest
weight of any edge in the graph.

We study online Minimum Spanning Forest in the edge arrival model, but
the results also hold in the vertex arrival model: 
For the upper bounds, an algorithm in the vertex arrival model
can process the edges incident to the current vertex in any order.
In the lower bound sequences presented here, all edges from a new vertex to
all previous vertices are presented together in an arbitrary order.

\begin{theorem}
For Minimum Spanning Forest in the \SM, the competitive ratio is~$W$.
\end{theorem}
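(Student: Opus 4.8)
The plan is to establish the upper and lower bounds separately, both using fairly direct arguments. For the upper bound, I would analyze the greedy-type algorithm that accepts every edge $e$ whose endpoints are not already connected in the forest built so far (i.e., the online version of Kruskal's algorithm that processes edges in arrival order rather than sorted order). This algorithm outputs a spanning forest $T_{\ALG}$: it never creates a cycle, and at the end every pair of vertices connected in $G$ is connected in $T_{\ALG}$, since whenever an arriving edge would have merged two components that are still separate, it was accepted. The key point is to compare $w(T_{\ALG})$ with $w(T_{\OPT})$, where $T_{\OPT}$ is a minimum spanning forest. Both forests have the same number of edges --- namely $n$ minus the number of connected components of $G$ --- because a spanning forest of a graph always has exactly that many edges. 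Hence $w(T_{\ALG}) \le (\text{number of edges}) \cdot w_{\max} \le (\text{number of edges}) \cdot W \cdot w_{\min} \le W \cdot w(T_{\OPT})$, since $w(T_{\OPT})$ is at least the number of edges times $w_{\min}$. This gives a strictly $W$-competitive algorithm.

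For the lower bound, I would exhibit an adversarial sequence forcing any online algorithm to do essentially $W$ times worse than optimal. The natural construction: present two vertices $u,v$ joined by a cheap edge of weight $w_{\min}$ is risky because the algorithm can just accept it; instead, present vertices in a way that the first edge offered across a cut is expensive. Concretely, start with vertices and present a single edge $e_1 = uv$ of weight $W$ (scaled so $w_{\min} = 1$, $w_{\max} = W$). If the algorithm rejects $e_1$, the adversary stops: the algorithm's forest is not spanning (it fails to connect $u$ and $v$), so it is not even feasible --- hence the algorithm must accept $e_1$. Then the adversary presents a parallel edge $e_2 = uv$ of weight $1$ (or, to avoid multigraphs, a path $u$-$x$-$v$ of two unit-weight edges forming a cheaper $u$-$v$ connection, though a cycle makes this slightly more delicate). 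The algorithm is stuck with the weight-$W$ edge $e_1$ in its irrevocably-accepted solution (standard model: no late reject), while $\OPT$ uses the cheap connection, for a ratio approaching $W$. To make the additive-constant version work and match the "competitive ratio is $W$" claim cleanly, I would repeat this gadget $k$ times on disjoint vertex sets, so $\ALG$ pays $\ge kW$ while $\OPT$ pays $k$ (plus a bounded correction for the cycle/multigraph issue), driving the ratio to $W$ in the limit.

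The main obstacle is the graph-model bookkeeping in the lower bound: whether multigraphs are allowed, and if not, how to present a "cheaper alternative $u$-$v$ connection" that the algorithm cannot have pre-committed to. Using a 2-path $u$-$x$-$v$ introduces a third vertex and means $\OPT$ must still connect $x$, so one must check that $\OPT$'s cost on the gadget is $2$ (the two unit edges) while $\ALG$'s is $W + 1$ (the weight-$W$ edge it was forced to take plus one unit edge to reach $x$), giving ratio $(W+1)/2$, which is not quite $W$. To fix this I would instead give $x$ first with no incident edges, then present $ux$ and $xv$ of unit weight \emph{before} any weight-$W$ edge, forcing the algorithm to build the unit path, and then --- this does not hurt the algorithm. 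So the genuinely correct gadget is: present $u, v$ and the single heavy edge $uv$ of weight $W$ first (algorithm must take it), \emph{then} reveal that actually there was a light alternative by presenting new vertices, but once committed the algorithm overpays. Pinning down exactly this sequence so that $\OPT$ pays a $1/W$ fraction of $\ALG$ on each gadget, and confirming the paper's convention that this yields competitive ratio exactly $W$ (not $W-\Theta(1)$ or $W/\Theta(1)$), is the step requiring the most care; the upper bound is routine.
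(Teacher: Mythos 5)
Your upper bound is correct and is essentially the paper's argument: every spanning forest of $G$ has the same number of edges, so any algorithm that outputs a spanning forest pays at most $W$ times \OPT. (One small caution: the algorithm must be forced to output a spanning forest at all; the paper's convention is that an online algorithm must maintain a forest spanning all vertices seen so far, since the input may end at any time. Your greedy algorithm satisfies this.)

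The lower bound, however, has a genuine gap that you half-notice but do not close. Any \emph{constant-size} gadget repeated $k$ times cannot force a ratio of $W$: as you yourself compute, the two-path gadget gives $\ALG = W+1$ versus $\OPT = 2$, i.e., ratio $(W+1)/2$, and the same obstruction recurs for any fixed gadget, because \OPT must also pay to connect the auxiliary ``cheap alternative'' vertices in every copy. Your proposed fix (``present the heavy edge first, then reveal a light alternative'') is left unspecified precisely at the point where the construction must differ from the failed one, and the parallel-edge version requires a multigraph. The missing idea is to let the gadget \emph{grow} rather than repeat: the adversary first presents a tree on $n-1$ vertices consisting of $n-2$ edges of weight $W$, all of which the algorithm must accept to keep a spanning forest, and only then presents a single new hub vertex $v$ with weight-$1$ edges to each of the $n-1$ previous vertices. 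The algorithm must pay at least one more unit to connect $v$, for a total of at least $(n-2)W+1$, while \OPT takes the star at $v$ for cost $n-1$. The ratio $\frac{(n-2)W+1}{n-1}$ tends to $W$ as $n\to\infty$; the single hub's cost is amortized over arbitrarily many heavy edges, which is exactly what no bounded gadget can achieve. Note also that this yields $W$ only as an asymptotic (non-strict) lower bound, consistent with the theorem claiming the competitive ratio, not the strict one.
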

\begin{proof}
Since all spanning forests have the same number of edges, the ratio
cannot be worse than $W$.
A matching upper bound can be realized by the adversary first
presenting a tree consisting of edges of weight~$W$, and then presenting 
edges of weight~$1$ from a new vertex $v$ to each of the vertices seen so far.
The ratio is $\frac{(n-2)W+1}{n-1}$, giving an asymptotic
lower bound of~$W$.
\mbox{}\qed\end{proof}

Since an online algorithm does not know when the input ends,
it must always have a forest spanning all the vertices seen so far,
so in moving from the \SM to the \LAM, we do not gain any advantage:

\begin{theorem}
For Minimum Spanning Forest in the \LAM, the competitive ratio is~$W$.
\end{theorem}

\begin{proof}
We show that we can never perform a late accept. Assume to the
contrary that an
edge $uv$ was late-accepted and added to a solution $F'$ for the
current graph $G'=(V',E')$.
Since $uv$ was late-accepted, both vertices $u$ and $v$
were seen earlier and thus contained in $V'$.  By our requirement
that the algorithms maintain a spanning forest on the set of vertices
presented so far, $F'$ is a spanning forest of $G'$.
Therefore, adding $uv$ created a cycle,
contradicting that the algorithm finds a forest.
\mbox{}\qed\end{proof}

On the other hand, in the \LRM, the greedy online algorithm mentioned by
Tarjan in~\cite{T83} can be used.
We detail the algorithm in the proof.

\begin{theorem}
For Minimum Spanning Forest in the \LRM, the competitive ratio is~$1$.
\end{theorem}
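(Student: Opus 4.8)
The plan is to use the classical greedy online algorithm for Minimum Spanning Forest that Tarjan attributes to folklore: maintain a spanning forest $F$ of the graph seen so far; when a new edge $e=uv$ arrives, add $e$ to $F$; if this creates a cycle, locate that cycle (the unique one, since $F$ was a forest) and late-reject a maximum-weight edge on it. Since the late-reject operation is exactly what the \LRM permits, and we never late-accept, this is a valid algorithm in the \LRM. The claim is that at every moment $F$ is a \emph{minimum} spanning forest of the graph presented so far, which immediately yields strict competitive ratio $1$.

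\medskip
\noindent First I would establish the invariant that $F$ is always a minimum spanning forest (MSF) of the current graph $G'$. This is an inductive argument on the number of edges presented. The base case is trivial. For the inductive step, suppose $F$ is an MSF of $G'$ and edge $e=uv$ arrives, forming $G''=G'+e$. If $u$ and $v$ lie in different components of $F$ (equivalently, of $G'$), then $e$ is a bridge in $G''$, so it lies in every spanning forest of $G''$; adding it to $F$ and not rejecting anything keeps $F$ minimum. If $u$ and $v$ lie in the same component, then $F+e$ contains a unique cycle $C$; let $f$ be a maximum-weight edge on $C$, and set $F' = F + e - f$. I would argue $F'$ is an MSF of $G''$ by the standard cycle-property / exchange argument: $F'$ is clearly a spanning forest of $G''$ with the same components as $F$, and for any spanning forest $F^\ast$ of $G''$, a cut-and-exchange comparison — or simply invoking that the greedy/cycle rule produces an MSF — shows $w(F') \le w(F^\ast)$. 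Concretely, $w(F') = w(F) + w(e) - w(f) \le w(F)$ since $f$ was chosen of maximum weight on $C$ (and $w(f)\ge w(e)$ because $e\in C$), and any MSF of $G''$ has weight at least that of an MSF of $G'$ restricted appropriately; the cycle property then pins down that $F'$ achieves the minimum.

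\medskip
\noindent Next I would handle the online/termination subtlety: an online algorithm must always output a spanning forest of the vertices seen so far, which the invariant guarantees at every step, so whenever the adversary stops, $F$ is an MSF of the final graph. Hence $\ALG(\sigma) = \OPT(\sigma)$ for every input $\sigma$, giving strict competitive ratio exactly $1$ (and it cannot be smaller, since $\OPT$ is by definition optimal). I would also note that the implementation is legitimate in the \LRM: the only nonstandard operation used is the removal of a previously accepted edge (late reject), never the acceptance of a previously rejected edge, and each rejected edge stays rejected.

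\medskip
\noindent The main obstacle is purely in writing the inductive step cleanly: one must be careful that the object maintained is a spanning \emph{forest} (not tree), so the exchange argument has to respect the component structure — in particular, when $u,v$ are already connected, deleting the max-weight cycle edge $f$ does not disconnect anything, whereas when they are not connected, no deletion is performed. Beyond this bookkeeping, everything reduces to the well-known cycle property of minimum spanning trees, which I would cite rather than reprove in full.
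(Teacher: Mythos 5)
Your proposal is correct and follows essentially the same route as the paper: the greedy online algorithm attributed to Tarjan that maintains a spanning forest and, upon creating a cycle, late-rejects a maximum-weight edge of that cycle (the red rule), yielding an optimal spanning forest at every step. The only cosmetic difference is that the paper directly rejects $e$ when $e$ is itself the heaviest cycle edge rather than accepting and then late-rejecting it, which changes nothing in the \LRM.
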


\begin{proof}
No algorithm can be better than $1$-competitive.
For the upper bound, we note that the greedy online algorithm
mentioned by Tarjan in~\cite{T83} works in the \LRM:
Assume that the current forest is~$F'$ when an edge $e=uv$ arrives.
If at least one of the two endpoints of $e$ is a vertex not seen earlier,
accept $e$. Otherwise,
the greedy algorithm constructs the unique cycle $C_e$
in $F' \cup \SET{e}$. If $e$ is not the heaviest edge in $C_e$, then the
algorithm late-rejects the heaviest edge $f$ in $C_e$ and replaces it
by $e$, obtaining $F''$. Otherwise, it rejects $e$.
It is easy to see that this produces an optimal
spanning forest; it only uses the so-called red rule~\cite{T83}.
\mbox{}\qed\end{proof}

Since the \LRM leads to an optimal spanning tree, any model allowing that
possibility inherits the result.

\begin{theorem}
For Minimum Spanning Forest in the \LAthenRM, the competitive ratio is~$1$.
\end{theorem}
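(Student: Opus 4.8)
The plan is to observe that the \LAthenRM is at least as powerful as the \LRM for this problem, and then invoke the preceding theorem. Concretely, any algorithm that is valid in the \LRM is also valid in the \LAthenRM, since the \LAthenRM simply grants the additional (here unused) ability to late-accept, subject to the restriction that a late-reject of an item is irrevocable. The greedy online algorithm from the proof of the \LRM theorem never late-accepts an edge; it only ever accepts a newly-arrived edge or late-rejects the heaviest edge on a cycle. Hence it trivially satisfies the \LAthenR constraint that no item is late-rejected and subsequently late-accepted, so it is a legal \LAthenR algorithm.

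First I would state that the upper bound of $1$ follows immediately: run the greedy algorithm of Tarjan exactly as described in the \LRM proof; it produces an optimal spanning forest at every stage, so it is strictly $1$-competitive in the \LAthenRM. Second, I would note the matching lower bound is immediate as well, since no algorithm can ever beat the optimum, so the competitive ratio cannot be below $1$. Together these give competitive ratio exactly $1$.

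There is essentially no obstacle here; the only thing to be careful about is the precise definition of the \LAthenRM and confirming that an algorithm which performs no late-accepts at all vacuously respects the ordering requirement (a late-accept must precede any late-reject of the same item). Since the greedy algorithm's only late operation is late-rejection, this is automatic. One could phrase the whole proof in a single sentence: the result is inherited from the \LRM, because every \LRM algorithm is also a legal \LAthenR algorithm, and $1$-competitiveness is optimal.

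\begin{proof}
No algorithm can be better than $1$-competitive, so the lower bound holds.
For the upper bound, observe that any algorithm valid in the \LRM is also
valid in the \LAthenRM: the \LAthenRM only grants the additional ability to
late-accept, and an algorithm that never late-accepts vacuously satisfies the
requirement that any late-accept of an item precede any late-reject of that
item. In particular, the greedy online algorithm of Tarjan used in the proof
for the \LRM performs only late-rejections, never late-accepts, so it is a
legal \LAthenR algorithm, and it produces an optimal spanning forest.
\mbox{}\qed\end{proof}
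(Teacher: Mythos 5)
Your proposal is correct and matches the paper's reasoning: the paper justifies this theorem with the single observation that the \LAthenRM inherits the $1$-competitive greedy algorithm from the \LRM, which is exactly your argument, only spelled out in more detail. No issues.
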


\section*{Future Work}

Since we prove tight results for all combinations of problems and
models considered, 
we leave no immediate open problems.
However, one could reasonably consider late operations a resource
to be used sparingly, as for the rearrangements 
in~\cite{IW91,MSVW16,GGK16,GK14}, for example.
Thus, an interesting continuation of our work would be a study of
trade-offs between the number of late operations employed
and the quality of the solution (in terms of competitiveness).
Obviously, one could also investigate other online problems
and further model variations.

\bibliographystyle{plain}
\bibliography{refs}

\end{document}